\DeclareMathOperator*{\argmin}{argmin} 
\newcommand{\sref}[1]{{Sec}~\ref{#1}}
\newcommand{\fref}[1]{{Fig}~\ref{#1}}
\newcommand{\eref}[1]{{Eq}~\ref{#1}}
\newcommand{\aref}[1]{{Algorithm}~\ref{#1}}
\newtheorem{theorem}{Theorem}[section]
\newtheorem{remark}{Remark}
\newtheorem{definition}{Definition}[section]
\begin{document}

\title{Federating Recommendations Using Differentially Private Prototypes}




\author[1]{M\'onica Ribero \thanks{ corresponding author:\texttt{mribero@utexas.edu}} }
\author[2]{Jette Henderson}
\author[2,3,4]{Sinead Williamson}
\author[1]{Haris Vikalo}

\affil[1]{Department of Electrical and Computer Engineering, University of Texas at Austin, Texas, USA}
\affil[2]{CognitiveScale, Austin, Texas, USA}
\affil[3]{Department of Statistics, University of Texas at Austin, Texas, USA}
\affil[4]{Department of Information, Risk and Operations Management, University of Texas at Austin, Texas, USA}

\maketitle 





\begin{abstract}
  Machine learning methods allow us to make recommendations to users in applications across fields including entertainment, dating, and commerce, by exploiting similarities in users' interaction patterns. However,  in domains that demand protection of personally sensitive data, such as medicine or banking, how can we learn such a model without accessing the sensitive data, and without inadvertently leaking private information? We propose a new federated approach to learning global and local private models for recommendation without collecting raw data, user statistics, or information about personal preferences. Our method produces a set of prototypes that allows us to infer global behavioral patterns, while providing differential  privacy guarantees for users in any database of the system. By requiring only two rounds of communication, we  both reduce the communication costs and avoid the excessive privacy loss associated with iterative procedures. We test our framework on synthetic data as well as real federated medical data and Movielens ratings data. We show local adaptation of the global model allows our method to outperform centralized matrix-factorization-based recommender system models, both in terms of accuracy of matrix reconstruction and in terms of relevance of the recommendations, while maintaining provable privacy guarantees. We also show that our method is more robust and is characterized by smaller variance than individual models learned by independent entities. 
\end{abstract}

\section{Introduction}\label{intro}

Machine learning models exploit similarities in users' interaction patterns to provide recommendations in applications across fields including entertainment (e.g., books, movies, and articles), dating, and commerce. Such recommendation models are typically trained using millions of data points on a single, central system, and are designed under the assumption that the central system has complete access to all the data. Further, they assume that accessing the model poses no privacy risk to individuals. In many settings, however, these assumptions do not hold. In particular, in domains such as healthcare, privacy requirements and regulations may preclude direct access to data. Moreover, models trained on such data can inadvertantly leak sensitive information about patients and clients. In addition to privacy concerns, when data is gathered in a distributed manner, centralized algorithms may lead to excessive memory usage and generally require significant communication resources. 

As a concrete example, consider the use of recommender systems in the healthcare domain. There, recommender systems have been used in a variety of tasks including decision support \citep{duan2011healthcare}, clinical risk stratification \citep{Hassan2010} and automatic detection of omissions in medication lists \citep{hasan2008towards}. Such systems are typically built using electronic health records (EHRs), which are subject to privacy constraints that limit the ability to share the data between hospitals. This restricts practical applications of recommender systems in healthcare settings as single hospitals typically do not have sufficient amounts of data to train insightful models. Even when training based on a single hospital's data is possible, the resulting models will not capture distributional differences between hospitals, thus limiting their applicability to other hospitals.

Recently, federated learning \citep{federatedLearning} was proposed as an algorithmic framework for the settings where the data is distributed across many clients and due to practical constraints cannot be centralized. In federated learning, a shared server sends a global model to each client, who then update the model using their local data. The clients send statistics of the local models (for example, gradients) to the server. The server updates the shared model based on the received client information and broadcasts the updated model to the clients. This procedure is repeated until convergence. Federated learning has proved efficient in learning deep neural networks for image classification \citep{federatedLearning} and text generation tasks \citep{yang2018applied,hard2018federated}.  

While federated methods address practical computing and communication concerns, privacy of the users in a federated system is potentially vulnerable. Although such systems do not share data directly, the model updates sent to the server may contain sufficient information to uncover model features and raw data information \citep{Milli2019modelReconstruction,koh2017understanding,Carlini:2019,Hitaj:2017:DMU}, possibly leaking information about the users. 
These concerns motivate us to adopt differential privacy \citep{dwork2006calibrating} as a framework for limiting exposure of users' data in federated systems. A differentially private mechanism is a randomized algorithm which allows us to bound the dependence of the output on a single data point. This, in turn, translates to bounds on the amount of additional information a malicious actor could infer about a single individual if that individual were included in the training set.

While the differential mechanism presents itself as a natural solution to privacy concerns of users in federated systems, combining the two paradigms faces some major challenges. The key ones emerge due to the differences in how the two frameworks function. On the one hand, federated learning algorithms are typically iterative and involve multiple querying of the individual entities to collect up-to-date information. On the other hand, in a differential privacy setting where the information obtained in each query must be privatized via injecting noise, the total amount of noise required to be added to a query scales linearly with the number of iterations (thus reducing utility of the system and the information content) \cite{kairouz2019advances, mcmahan2018general}. 

In this paper, we present a novel differentially private federated recommendation framework for the setting where each user's data is associated with one of many entities, e.g., hospitals, schools or banks. Each entity is tasked with maintaining the privacy of the data entrusted to it against possible attacks by malicious entities. An untrusted server is available to learn centralized models and communicate (in both directions) with the individual entities. Our method learns per-entity recommender models by sharing information between entities in a federated manner, without compromising users' privacy or requiring excessive communication. Specifically, our method learns differentially private prototypes for each entity, and then uses those prototypes to learn global model parameters on a central server. These parameters are returned to the entities which use them to learn local recommender models without any further communication (and, therefore, without any additional privacy risk). 

To our knowledge, the proposed framework is the first scheme that introduces differential privacy mechanisms to federated recommendations. Unlike typical federated learning algorithms, our method requires only two global communication steps. Such a succinct communication reduces the amount of noise required to ensure differential privacy while also reducing communication overhead and minimizing the risk of communication interception. Yet despite providing differential privacy guarantees to participating entities, the framework allows each entity to benefit from data held by other entities through building its own private, uniquely adapted model. Specific contributions of the paper can be summarized as follows:

    \begin{itemize}
        \item We propose a federated recommendation framework for learning latent representation of products and services while bounding the privacy risk to the participating entities. 
        This is accomplished by estimating the column space of an interaction matrix from differentially private prototypes via matrix factorization.
        \item We enable federating recommendations under communication constraints by building in the requirement that the number of communication rounds between participating entities and the shared server is only two.
        \item We demonstrate generalizable representations and strong predictive performance in benchmarking tests on synthetic and real-world data comparing the proposed framework with individual models and conventional federated schemes that lack privacy guarantees.
    \end{itemize}

  \section{Background} \label{sec:background}
  \subsection{Recommender Systems}

The goal of recommender systems is to suggest new content to users. Recommender systems can be broadly classified in two categories: content filtering and collaborative filtering. Under the content-based paradigm,  user and/or item profiles are constructed from demographic data or item information, respectively. For example, a user profile could include age while movies could be associated with genre, principal actors, etc. With this information, similarity between users or items can be computed and utilized for recommendation via, for example, clustering or nearest neighbors techniques \citep{koren2009matrix}.  Collaborative filtering \citep{goldberg1992using} relies on  past user behaviour (ratings, views, purchases) to make recommendations, avoiding the need for additional data collection \citep{Herlocker1999Collaborative,Koren2008}. In this paper we focus on collaborative filtering, although our methodology could be extended to incorporate additional content-based information \citep{rendle2010factorization}.  Below we introduce notation and summarize relevant techniques. 

Consider a set of  $n$ users and a set of $m$ items, where each user has interacted with a subset of the items. We assume that the interactions for user $i$ can be summarized via a partially observed feedback vector $\mathbf{x}_i \in \mathds{R}^m$, and that all user-item interactions can be represented by a partially observed matrix $X\in \mathds{R}^{n\times m}$. Entries $x_{ij}$ can be in the form of explicit feedback, e.g. numerical ratings from 1 to 5, or implicit, such as binary values indicating that a user viewed or clicked on some content \citep{hu2008collaborative,Zhao2018explicit,Jawaheer2010ComparisonImplicit}. The goal is to predict items that a user would like but has not previously interacted with (i.e.,\ to predict which of the missing values in $\mathbf{x}_i$ have high values).

\subsubsection{Matrix Factorization}
Matrix factorization is a popular and effective collaborative filtering approach used in many different fields to find low dimensional representation of users and items \citep{Koren2008,koren2009matrix,matrixFac2010,mcauley2013hidden}.

A matrix factorization approach assumes that users and items can be characterized in a low dimensional space $\mathds{R}^\ell$ for some $\ell \ll \min(m, n)$, i.e., that the partially observed matrix $X$ can be approximated by $X \approx UV^T$, where $U\in \mathds{R}^{n \times \ell}$ aggregates users' representations, and $V\in \mathds{R}^{m \times \ell}$ collects items' representations. In this paper, we rely on non-negative factorization, i.e., we constrain the estimates of $U$ and $V$ to be non-negative. Such a constraint often results in more interpretable latent factors and improved performance \citep{nmf2006,mcauley2013hidden}. In this setting, $U$ and $V$ can be estimated as
\begin{equation}
    \hat{U}, \hat{V} = \argmin_{U,V \geq 0} \|X-UV^T\|_2^2 + f(U, V)
    \label{eq1}
\end{equation}
where $f(U, V)$ is a regularizer. For the remainder of this paper, we assume $f(U, V) = \lambda (\|U\|^2 +\|V\|^2)$. 

Since we only have access to a subset of the entries of $X$, (\ref{eq1}) is solved by minimizing the error over the training set of ratings $T$,
\begin{equation} \label{matFact} 
    \argmin_{U,V \geq 0}\frac{1}{N}\sum_{x_{ij} \in T} \|x_{ij}-\mathbf{u}_i\mathbf{v}_j^T\|_2^2 + \lambda (\|U\|^2 +\|V\|^2),
\end{equation}
where $\mathbf{u}_i$ denotes the $i$th row of $U$ -- i.e., the latent representation for the $i$th user -- and $\mathbf{v}_j$ is the $j$th row 
of $V$ -- i.e.,\ the latent representation for the $j$th item.

\subsection{Federated learning}\label{sec:FedLearn}
Federated learning was introduced by \citet{federatedLearning} as a framework for learning models in a decentralized manner, and originally applied to learning with neural networks. The goal of federated learning is to infer a global model without collecting raw data from participating users. This is achieved by having the users (or entities representing multiple users) locally compute model updates based on their data and share these updates with a central server. The server then updates the global model and sends it back to the users. 

While they avoid directly sharing users' data, most federated learning algorithms offer no formal guarantees that a malicious agent could not infer private information from the updates. For example, in a na\"{i}ve application of the original federated learning method \citep{federatedLearning} to a matrix-factorization-based recommender system, each entity shares parameters including a low-dimensional representation of each user, leading to a high risk of potential privacy breaches.

\subsection{Differential Privacy} \label{privacy}
Differential privacy \citep{dwork2006calibrating} is a statistical notion of privacy that bounds the potential privacy loss an individual risks by 
allowing her data to be used in the algorithm. 
  
\begin{definition}
A randomized algorithm $\mathcal{M}$ satisfies $\epsilon$-differential privacy ($\epsilon$-DP) if for any datasets  $A$ and $B$ differing by only one record and any subset $S$ of outcomes $S \in range(\mathcal{M})$, $$Pr(\mathcal{M}(A) \in S) \leq e^{\epsilon} \cdot Pr(\mathcal{M}(B) \in S).$$
\end{definition}
In other words, for any possible outcome, including any given individual in a data set can only change the probability of that outcome by at most a multiplicative constant which depends on $\epsilon$. Differential privacy has been applied to recommender systems by adding noise to the average item ratings and the item-item covariance matrix \citep{mcsherry2009differentially}. However, this approach is designed for systems wherein a centralized server needs to collect all the data to derive users and items' means and covariances. Differential privacy is more difficult to impose in iterative algorithms, such as those commonly used in federated learning scenarios, since the iterative nature of these algorithms requires splitting privacy budget across iterations, thus bringing forth technical challenges \cite{abadi2016deep,wu2017bolt,mcmahan2018general}.
\subsubsection{Differentially private prototypes}

Our design of private prototypes is motivated by the efficient differentially private $k$-means estimator for high-dimensional data introduced in \citet{pmlr-v70-balcan17a}. This algorithm  first relies on the Johnson-Lindenstrauss lemma to project the data into a lower dimensional space that still preserves much of the data's underlying structure. Then, the space is recursively subdivided, with each subregion and its corresponding centroid being considered a candidate centroid with probability that depends on the number of points in the region and the desired value of privacy $\epsilon$. The final $k$-means are selected from the candidate centroids by recursively swapping out candidates using the exponential mechanism \citep{McSherry:Talwar:2007}, where the score for each potential collection is the clustering loss. The selected candidates are mapped back to the original space by taking a noisy mean of data points in the corresponding cluster, providing $\epsilon$-DP. 

The complete algorithm is parametrized by the number of clusters $k$; the privacy budget $\epsilon$; a parameter $\delta$ such that with probability at least $1-\delta$, the clustering loss $\mathcal{L}\left( \{\Tilde{z_i}\}_{i=1}^k \right)$ associated with the $k$ centers $\{\Tilde{z_i}\}_{i=1}^k$ satisfies
$$ \mathcal{L}\left( \{\Tilde{z_i}\}_{i=1}^k \right) \leq O(\log^3n)\text{OPT} + O \left(\frac{k^2\epsilon +m}{\epsilon^2} \Lambda^2 \log^5\frac{n}{\delta}\right),$$
where OPT is the optimal loss under a non-private algorithm; $n$ is the number of data points; $m$ is the dimensionality of the data; and $\Lambda$ is the radius of a ball $\mathcal{B}(0, \Lambda) \in \mathds{R}^m$ that bounds the data. We formalize this algorithm as \texttt{private\_prototypes} in the supplementary material. 

The \citet{pmlr-v70-balcan17a} method is one of a number of differentially private algorithms for finding cluster representatives or prototypes. \citet{blum2005practical} introduced SuLQ $k$-means, where the server updating clusters' centers receives only noisy averages. Unlike the approach of \citet{pmlr-v70-balcan17a}, this algorithm does not have guarantees on the convergence of the loss function. \citet{Nissim:2007:SSS} and \citet{wang2015differentially} use a similar framework but calibrate the noise by local sensitivity, which is difficult to estimate without assumptions on the dataset \citep{Zhu2017dpsurvey}.  Private coresets have been used to construct differentially private $k$-means and $k$-medians estimators \citep{privateCoresets2017}, but this approach does not scale to large data sets. 

\section{Related Work} \label{sec:relatedWork}

To the best of our knowledge, the current paper is the first to propose learning recommender systems in a federated manner while guaranteeing differential privacy. However, a number of other approaches have been proposed for incorporating notions of privacy in recommender systems \citep{friedman2015privacy}; we summarize the main ones below.

        \paragraph{Cryptographic methods.} 
        
        A number of private recommender systems have been developed using a cryptographic approach \citep{miller2004pocketlens,kobsa2003privacy}. Such methods use encryption to protect users by encoding personal information with cryptographic functions before it is communicated.  In the healthcare context, \citet{hoens2013reliable} have applied cryptographic methods to providing physician recommendations. However, these methods require centralizing the dataset to perform calculations on the protected data, which may be infeasible when the total data size is large or communication bandwidth is limited, or where regulations prohibit sharing of individuals' data even under encryption.

\paragraph{Federated recommender systems.}

In addition to the generic federated learning algorithms discussed in \sref{sec:FedLearn}, alternative federation methods have been proposed for matrix factorization, where the information being shared is less easily mapped back to individual users. \citet{kim2017federated} consider federated tensor factorization for computational phenotyping. There, the objective function is broken into subproblems using the Alternating Direction Method of Multipliers  \citep[ADMM,][]{boyd2011distributed}, where the alternated optimization is utilized to distribute the optimization between different entities. User factors are learned locally, and the server updates the global factor matrix and sends it back to each entity. In a similar way, \citet{ammad2019federated} perform federated matrix factorization by taking advantage of alternating least squares. They decouple the optimization process, globally updating items' factors and locally updating users' factors. These two approaches converge to the same solution as non-federated methods. However, since current variables need to be shared at each optimization stage, this technique requires large communication rates and users' synchronization. While either of the above factorization methods could be adapted to recommender systems, they lack strict privacy guarantees and require extensive communication.
      
\paragraph{Differential privacy.}

In a recommender systems context, \citet{mcsherry2009differentially} rely on differential privacy results to obtain noisy statistics from ratings.  Although the resulting model provides privacy guarantees, it requires access to the centralized raw data in order to estimate the appropriate statistics. This makes it unsuited for the data-distributed setting we consider.

\paragraph{Exploiting public data.}

\citet{controllingPrivacy2014} consider matrix factorization methods to learn $X\approx UV^T$ (see \sref{sec:background}) in the setting where we can learn the item representation matrix $V$ from publicly available data. The public item matrix is then shared with private entities to locally estimate their latent factors matrix $U$. The applicability of this approach is hindered by potentially limited access to public data, which is the case in sensitive applications such as healthcare recommendations. Our approach provides an alternative method for learning a shared estimate of $V$ from appropriately obscured private data.

  \section{A Differentially Private, Federated Recommender System}\label{sec:model}
  
  We propose a model for learning a recommender system in a federated setting where different entities possess a different number of records. We assume the data is split between $H$ entities such that each entity possesses data for more than one user. The partially observed user-item interaction matrix associated with the $h$th entity is denoted by $X_h$.  
  
  We assume that the training data is sensitive and should not be shared outside the entity to which it belongs. While each entity will need to communicate information to a non-private server, we wish to ensure this communication does not leak sensitive information.
  
  In order for differential privacy and federated recommender systems to work in concert, our framework must accomplish two objectives: 1) make recommendations privately by injecting noise in a principled way, and 2) reduce the number of communications to minimize the amount of injected noise. The solutions to these requirements are interrelated. 
  We first describe a method that reduces the number of communication steps to two, and then procede to describe how to solve the privacy challenge.

  \subsection{A One-shot Private System}
  
Most federated learning methods require multiple rounds of communication between entities and a central server, which poses a problem for differential privacy requirements.
Specifically, we can think of each round of communication from the entities to the server as a query sent to the individual entities, which has potential to leak information. 
If we query an $\epsilon$-DP mechanism $K$ times, then the sequence of queries is only $K\epsilon$-DP \citep{McSherry:2009}. 
In practice, this means that, the more communication we require, the more noise must be added to maintain the same level of differential privacy.

To minimize the amount of noise a differential privacy technique will introduce, our method must limit the number of communication calls between the entities. In the context of matrix factorization-based recommendations which involve estimating $\hat{X} = \hat{U}\hat{V}^T$, as discussed in \sref{sec:relatedWork}, \citet{controllingPrivacy2014} show that transmission of private data can be avoided by using a public dataset to learn the shared item representation $\hat{V}$.
Given $\hat{V}$, each entity can privately estimate $\hat{X}_h = \hat{U}_h\hat{V}^T$ without releasing any information about $X_h$. 
Building upon this idea, we constrain the communication to only two rounds, back and forth. However, in our problem setting we do not have access to a public data set. Instead, we construct a shared item representation $\hat{V}$ based on privatized prototypes $P_h$ collected from each entity.  These prototypes are designed to: a) contain similar information as $X_h$, thus allowing construction of an accurate item representation; b) be of low dimension relative to $X_h$, hence minimizing communication load; and c) maintain differential privacy with respect to the individual users. 
We elaborate on building prototypes in \sref{sec:prototypes}.
  
Once we have generated prototypes for each entity, we send them to a centralized server that learns shared item representation $\hat{V}$ through traditional matrix factorization (see \sref{sec:background}). This shared matrix is then communicated back to the individual entities which use it to learn their own users' profile matrices and make local predictions. 

In contrast to iterative methods, the proposed approach requires only two rounds of communication: one from the entities to the server, and one from the server to the entities. In addition to reducing communication costs and removing the need for synchronization, this strategy allows us to conserve the privacy budget. With only one communication requiring privatization, we are able to minimize the noise that must be added to guarantee a desired level of privacy.

    \subsection{Learning Prototypes}\label{sec:prototypes}

For our algorithm to succeed, we must find a way to share the information from all $H$ entities in order to build a global item representation matrix $\hat{V}$. We want the prototypes to be representative of the data set, i.e., ensure they convey useful information. Note that to satisfy $\epsilon$-differential privacy, each set of prototypes must be $\epsilon$-differentially private.

Differentially private dataset synthesis methods (see \citet{Bowen:Liu:2016} for a survey) could be used to generate $\widetilde{X}_h$ having statistical properties similar to $X_h$. However, these methods tend to be ill-suited for high-dimensional settings and would involve sending a large amount of data to the server. Instead, we consider methods that find differentially private \textit{prototypes} of our dataset, with the aim of obtaining fewer samples that still capture much of the variation present in the individual data. Since we will use these prototypes to capture low-rank structure, provided each entity sends the number of prototypes larger than the rank, it is possible for such prototypes to contain the information required to recover singular vectors of $X_h$ yet still be smaller than $X_h$, thus reducing the amount of information that needs to be communicated. When selecting the prototype mechanism, we recall the following two observations.

\begin{algorithm*}[h!]
\SetAlgoLined
\KwIn{Data belonging to same cluster $\{x_i\}_{i=1}^m \subseteq \mathbb{R}^{n}$ with $\|x_i\|_{\infty} \leq \Lambda$, $\|x_i\|_{0} \leq s$; parameters $ \epsilon, \delta$}
\KwResult{ Centroid $v \in \mathbb{R}^{n}$}

Compute $\mu = \frac{1}{n} \sum_{i=1}^nx_i$ \\
$\Tilde{ \mu} \gets \mu + Y$ with $Y \sim \text{Gumbel}(\frac{\epsilon}{2s\Lambda}|\mu|)$ \\
 $I\gets \{i : \Tilde{\mu}_i \text{ is in top } s \}$ \\
$v \gets \begin{cases}
          \mu_i  + Lap\left(\frac{2\Lambda s}{\epsilon n} \right) & \text{ if } i \in I \\
        0 & \text{ otherwise}
        \end{cases}$ \\
\Return{v}
 \caption{Sparse recovery in high dimension}
 \label{alg:sparserec}
\end{algorithm*}

\begin{algorithm*}[h!]
    \SetAlgoLined
    \KwIn{Per-entity ratings matrices $\{X_h \in \mathcal{B}(0, \Lambda)\}_{h=1}^H$; parameters $k, \epsilon, \delta, \lambda$}
    \KwResult{ Shared $n\times \ell$ item matrix $\hat{V}$, private $n_h\times \ell$ user matrices $\hat{U}_h$, private reconstructions $\hat{X}_h$ }

    \For{$h \in \{1, \dots, H\}$} {
        $P_h \gets$ \texttt{private\_prototypes}$(X_i, \epsilon, \delta, k)$ \\
        Send $P_h$ to server \\
    }
    Compile prototypes, $P = [P_1^T, P_2^T, \cdots, P_H^T]^T$ \\
    Estimate $\hat{V}$ from $P$ following \eref{matFact}\\
    Broadcast $\hat{V}$ to all entities.\\
    \For{$h \in \{1, \dots, H\}$} {
       Estimate $\hat{U}_h$ given $\hat{V}$ following \eref{matFact}. \\
    Predict $\hat{X}_h = \hat{U}_h\hat{V}^T$ \\
        }
 \caption{Federated Recommender System}
 \label{alg:fedRecSys}
\end{algorithm*}


\begin{remark}
Non-negative matrix factorization (NMF) and spectral clustering have been shown to be equivalent \citep{ding2005equivalence}. 
\end{remark}

\begin{remark}
(Theorem 3 in \citet{pollard1982quantization})
Let $m_1, \dots, m_k$ be the optimium of the $k$-means objective on a dataset $X = \{x_i\}_{i=1}^n$ distributed according to some distribution $P$ on $\Omega$, and let $\mathcal{M}_k$ be the set of discrete distributions on $\Omega$ with support size at most $k$. Then, the discrete distribution implied by $m_1, \dots, m_k$ is the closest  discrete distribution to $P$ in $\mathcal{M}_k$ with respect to the 2-Wasserstein metric.
\end{remark}

Since we are learning the item matrix $\hat{V}$ via NMF, Remark 1 suggests that one should capture the centroids of clusters in $X_h$ to preserve spectral information. Remark 2 implies that the prototypes obtained via $k$-means are close, in a distributional sense, to the underlying distribution. 
Following these intuitions, we consider prototype generation methods based on $k$-means. 
Since the learned prototypes are created to capture the same latent representation that would be captured by NMF, we expect the estimated item matrix $\hat{V}$ to be close to the true $V$. 

Due to being appropriate for high-dimensional data, we adopt the framework of the differentially private candidates algorithm of \citet{pmlr-v70-balcan17a}. 
Note that this algorithm initially maps the data onto a low-dimensional space; however, since we are using the prototypes to learn a low-dimensional representation, such a mapping is unlikely to adversely impact the accuracy of the proposed method. We  augment the scheme of \citet{pmlr-v70-balcan17a} -- while maintaining accuracy and privacy guarantees -- by a novel recovery algorithm. The algorithm increases overall efficiency by exploiting the sparsity of the data and the Gumbel trick, often used to efficiently sample from discrete distributions \cite{papandreou2011perturb, balog2017lost,durfee2019practical}. 

 After obtaining cluster assignments for each datapoint, instead of sequentially applying the exponential mechanism to recover non-zero entries on the centroid, we add noise drawn from a Gumbel distribution to the centroid mean and take the top-$s$ entries, where $s$ denotes the number of non-zero entries in the dataset. We formalize this procedure as \aref{alg:sparserec}. \aref{alg:fedRecSys} summarizes the proposed private, federated recommender system.

\begin{theorem}
\aref{alg:fedRecSys} is $\epsilon$-Differentially Private.
\end{theorem}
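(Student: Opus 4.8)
The plan is to decompose \aref{alg:fedRecSys} into three stages and track how the privacy guarantee propagates through each: (i) the per-entity generation of the prototypes $P_h$, which is the only stage that touches raw data and then communicates a function of it off the entity; (ii) the server-side estimation of the shared item matrix $\hat{V}$ from the collected prototypes; and (iii) the purely local estimation of $\hat{U}_h$ and $\hat{X}_h$. The guiding observation is that the only data-dependent information that ever leaves an entity is the prototype set $P_h$, so it suffices to show that the map from the raw data to the collection $\{P_h\}_{h=1}^H$, together with anything derived from it, is $\epsilon$-DP.

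First I would establish that a single call to \texttt{private\_prototypes}, as augmented by the recovery routine in \aref{alg:sparserec}, is $\epsilon$-DP with respect to that entity's data $X_h$. For the candidate-selection portion I would invoke the guarantee of \citet{pmlr-v70-balcan17a} directly. The new content is the recovery step, whose privacy I would argue in two pieces that share the budget $\epsilon$: the Gumbel-perturbed top-$s$ index selection is an instance of the exponential mechanism via the Gumbel-max identity, so calibrating the Gumbel scale to the sensitivity of the coordinate scores (which is controlled by the $s$-sparsity and the bound $\Lambda$) yields a differentially private choice of support $I$; and releasing the selected means perturbed by $\mathrm{Lap}(2\Lambda s/(\epsilon n))$ noise is the Laplace mechanism applied to a vector whose $\ell_1$-sensitivity is bounded, since each record is $s$-sparse with entries bounded in magnitude by $\Lambda$. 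Composing these two pieces, and folding in the Balcan candidate step, gives a single $\epsilon$-DP mechanism producing $P_h$.

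Next I would lift this per-entity guarantee to the joint output $\{P_h\}_{h=1}^H$ via parallel composition. Because every user belongs to exactly one entity, two databases differing in a single record differ inside exactly one entity $h^\ast$; the prototype distribution for $h^\ast$ changes by at most a factor $e^{\epsilon}$, while the distributions for all other entities are identical across the two databases. Hence the product distribution over $\{P_h\}$ satisfies the $\epsilon$-DP inequality, with no $H$-fold blow-up of the budget. I would then close the argument with the post-processing property: the compiled matrix $P$, the factorization producing $\hat{V}$ from $P$ via \eref{matFact}, and the broadcast of $\hat{V}$ are all (possibly randomized, but with data-independent randomness) functions of $\{P_h\}$, so $\hat{V}$ inherits $\epsilon$-DP; the remaining quantities $\hat{U}_h$ and $\hat{X}_h$ are computed locally from $\hat{V}$ and $X_h$ and are never communicated, so they enlarge no information released about any record.

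The main obstacle, and the step I would spend the most care on, is the privacy accounting for \aref{alg:sparserec}: verifying the Gumbel-max equivalence to the exponential mechanism, pinning down the sensitivity constants and the split of $\epsilon$ between the selection and the Laplace recovery, and confirming that substituting this routine into the \citet{pmlr-v70-balcan17a} pipeline preserves rather than degrades its overall $\epsilon$-DP guarantee. The parallel-composition and post-processing steps, by contrast, are essentially structural and should follow once the per-entity bound is secured.
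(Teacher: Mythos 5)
Your proposal is correct and follows essentially the same route as the paper's proof: establish that each entity's prototype release is $\epsilon$-DP (via the Balcan et al.\ pipeline together with the Gumbel-max equivalence to the exponential mechanism for the sparse recovery step), then apply parallel composition across the disjoint per-entity datasets, with everything downstream of the prototypes covered by post-processing. Your version is in fact more explicit than the paper's about the budget split between the Gumbel selection and the Laplace release and about the post-processing of $\hat{V}$, $\hat{U}_h$, and $\hat{X}_h$, but the underlying argument is the same.
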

\begin{proof} The server interacts with the private datasets $X_h$ only once, when collecting the private prototypes. \citet{durfee2019practical} prove that adding noise  $Y \sim \text{Gumbel}(\frac{2\Delta q}{\epsilon})$ to the utility function $q$, and selecting the top $k$ values from the noisy utility, is equivalent to applying the exponential mechanism $k$ times; therefore, transmission of a single prototype is  $\epsilon$-DP.  The parallel composition theorem \citep{McSherry:2009} establishes that the overall privacy budget is given by the maximum of the individual budgets, implying that the overall algorithm is $\epsilon$-DP. 
\end{proof}

 \section{Experiments} \label{sec:experiments}
  
  We first test the performance of the proposed differentially-private federated recommender system on synthetic data and report the results in 
  \sref{sec:exp_prototypes}. Then, to demonstrate the ability to provide high-quality recommendations in realistic settings, in \sref{sec:real_data} we apply the system to real-world datasets.

    For all the experiments, we fixed the level of regularization to $\lambda= 0.1 $ since we did not observe notable difference in performance 
    when varying $\lambda$ from $0.01$ to $10$.

  \subsection{Datasets}\label{sec:datasets}
  We test the proposed scheme on three different datasets. The first one is a synthetic dataset intended to simulate discrete processes such as ratings or counting event occurrences. The relevant matrices are generated as $U \sim \text{Norm}(0,1) \in \mathds{R}^{m\times \ell}$, $V \sim \text{Norm}(0,1) \in \mathds{R}^{n\times \ell}$,  and $X \sim \text{Pois} (\exp(UV^T)) $. We set $n = 100,000$, $m=500$, $\ell = 100$, and  distribute the data uniformly across 10 different entities.

   The second dataset is from the eICU Collaborative Research Database \citep{pollard2018eicu}, which contains data collected from critical care units throughout the continental United States in 2014 and 2015. Since different visits can have diverse causes and diagnoses, we count each patient visit as a separate observation. We  use the $\texttt{laboratories}$ and $\texttt{medicines}$ tables from the database, and create a 2-way table where each row represents a patient and each column either a lab or medicine. Matrix $X$ is composed using data from over $190$k patients, $457$ laboratories and medications, and $205$ hospitals. Each entry $x_{ij}$ represents how many times a patient took a test or a medication; the goal is to recommend treatments.
   
   Finally, we consider the Movielens 1M dataset, containing 1,000,209 anonymous ratings from 6,040 MovieLens users on approximately 3,900 movies. We use the first digit of each user's ZIP code to set up a natural federation of the data.

\subsection{Evaluation metrics}
To assess convergence and perform parameter tuning, we use the Root Mean Squared Error (RMSE) between the real $X$ and the reconstructed $\hat{X} = \hat{U}\hat{V}^T$. In the case of the synthetic data, RMSE is a suitable measure to examine the fit quality since we have access to the ground truth.

Additionally, to evaluate the quality of recommendations in the hospital and movie data tests, we compare the real and predicted rankings over the test samples using Mean Average Ranking~\cite{hu2008collaborative}. Concretely, let $\texttt{rank}_{ui}$ be the percentile of the predicted ranking of item $i$ for user $u$, where 0 means very highly ranked and above all other items. We calculate  $\overline{\texttt{rank}}$ on a test set $\mathcal{T}$ defined as 
\begin{equation}\overline{\texttt{rank}} =\frac{ \sum_{(u,i): x_{ui} \in \mathcal{T}} x_{ui} \texttt{rank}_{ui} }{\sum_{(u,i): x_{ui}\in \mathcal{T}} x_{ui} }.\label{eqn:rank}\end{equation}
This measure compares the similarity between the real and predicted ranks. Intuitively, for a random ranking the expected $\texttt{rank}_{ui}$ is $0.5$, so $\overline{\texttt{rank}} \geq 0.5$ means a ranking no better than random. Conversely, lower values indicate highly ranked recommendations matching the users' patterns.

   \subsection{Evaluating the impact of federation and privacy on synthetic data}\label{sec:exp_prototypes}
   
   Recall that our algorithm differs from the standard matrix factorization schemes in two key aspects: first, it learns the item matrix $\hat{V}$ using \textit{prototypes}, rather than the actual data; second, it learns the users' sub-matrices $\hat{U}_h$ independently given $\hat{V}$, rather than jointly. Moreover,insteat of learning the prototypes using exact $k$-means, to ensure differential privacy we use an $\epsilon$-DP algorithm. Here we explore the effect of these algorithmic features.

  In particular, we compare our framework with the following algorithms:
   \begin{itemize}
       \item \textbf{Matrix factorization}: Apply Eq~\ref{matFact} until convergence on $X = [X_1^T, \dots, X_H^T]^T$.
       \item \textbf{MF + $k$-means}: Apply \eref{matFact} to factorize a matrix of exemplars $P\approx \hat{U}\hat{V}^T$, where $P$ collects the $k$-means from each matrix $X_1, \dots, X_H$. Use the estimate $\hat{V}$ to learn individual matrices $\hat{U_h}$ from $X_h$. 
       \item \textbf{MF + $k$-random}: Identical to \textbf{MF + $k$-means}, but instead of using the cluster means, use $k$ random samples from $X_1, \dots, X_H$.
       \item \textbf{MF + $\epsilon$-private prototypes}: Identical to \textbf{MF + $k$-means}, but instead of using true cluster means, use 
       the generated $\epsilon$-DP prototypes \footnote{See Algorithm \texttt{private\_prototypes} in the supplementary.} .
   \end{itemize}

   \begin{figure*}[!h]
\centering
     \begin{subfigure}{0.22\textwidth}
     \centering
\includegraphics[width=\textwidth,left]{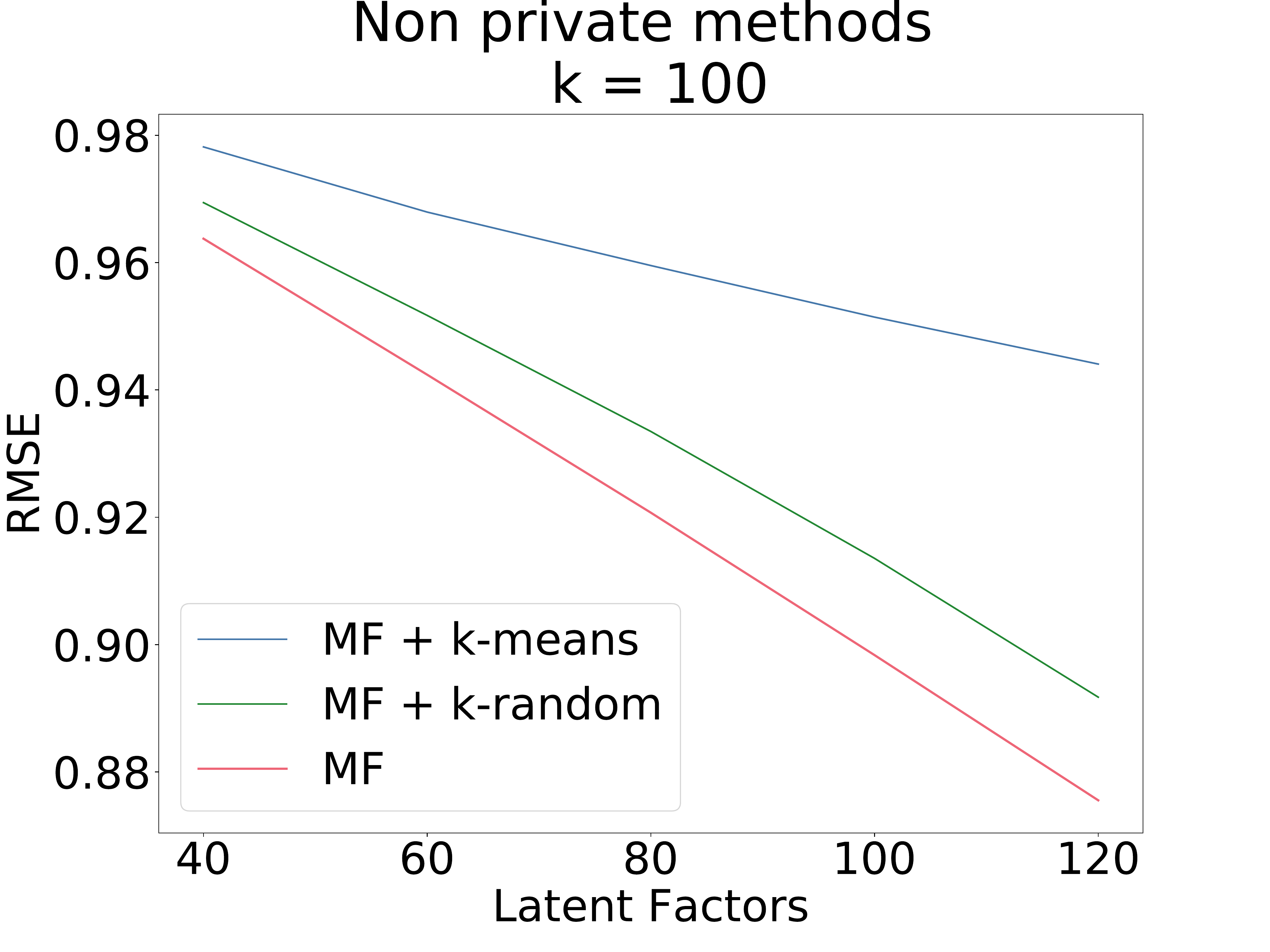}
\caption{RMSE vs. latent factors for non-private methods.}
\label{fig:exp2:nonprivate1}
\end{subfigure}
     \begin{subfigure}{0.22\textwidth}
     \centering
\includegraphics[width=\textwidth,left]{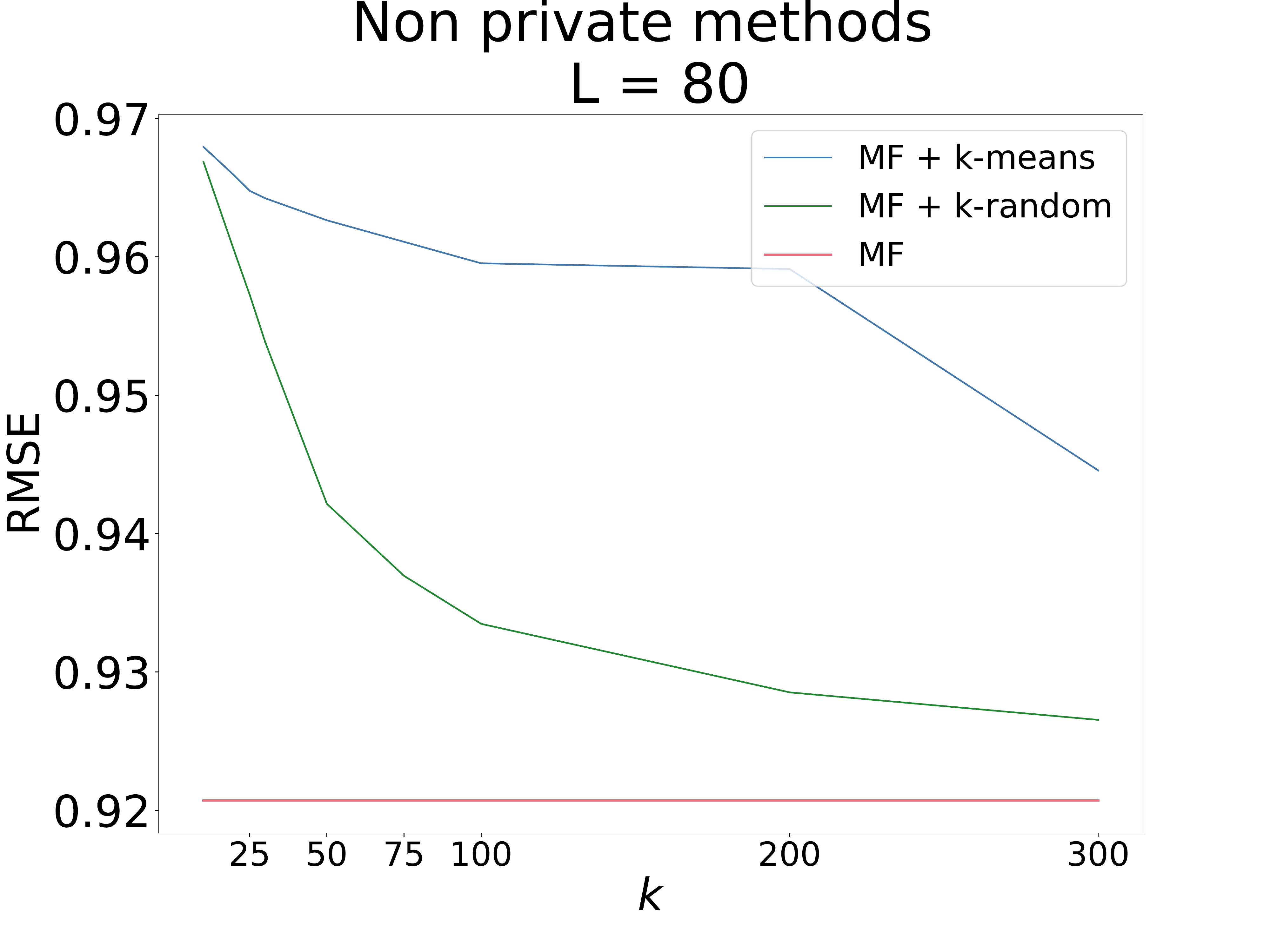}
\caption{RMSE vs. $k$ for non-private methods.}
\label{fig:exp2:nonprivate2}
\end{subfigure}
     \begin{subfigure}{0.22\textwidth}
     \centering
\includegraphics[width=\textwidth,left]{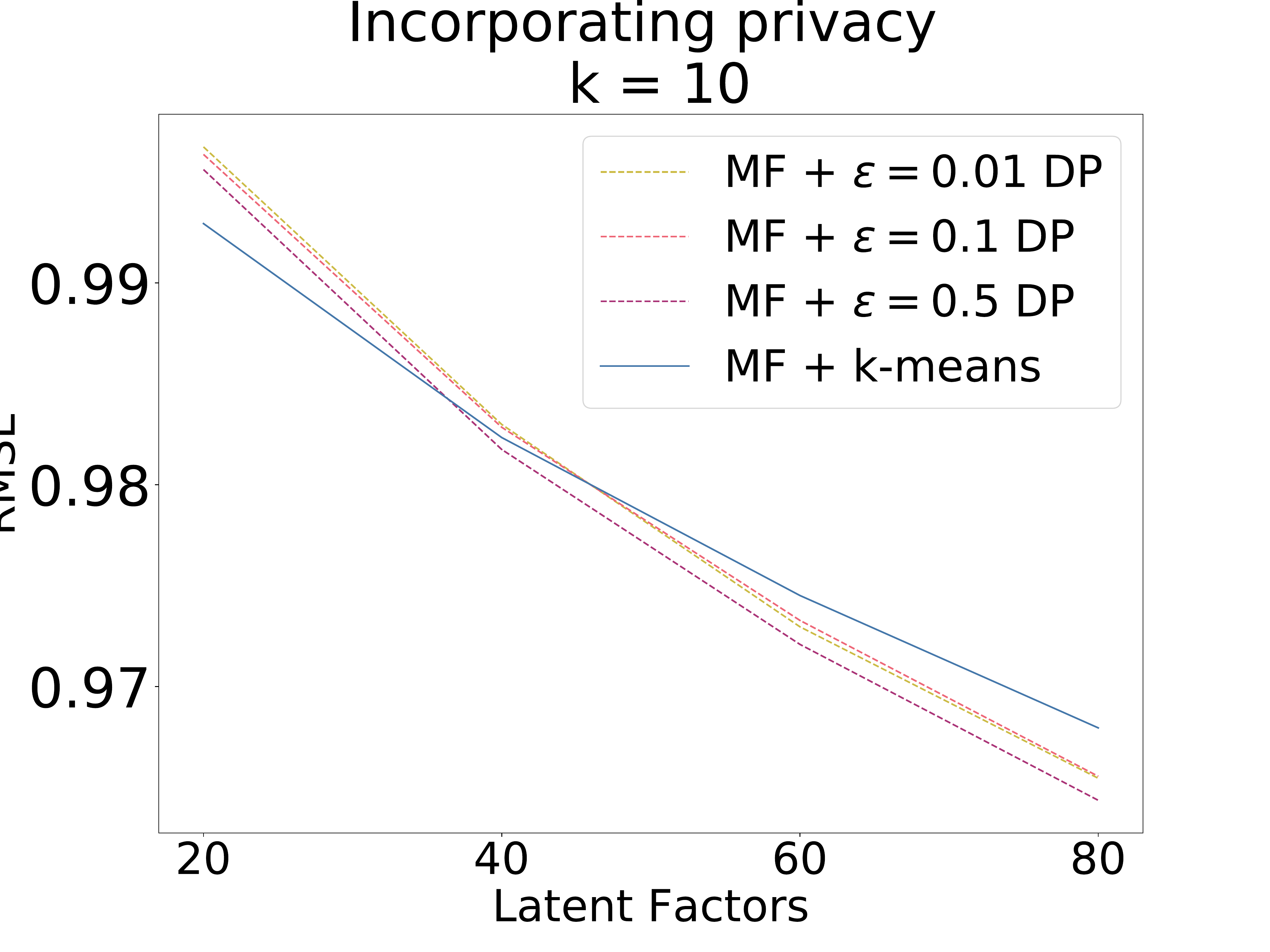}
\caption{RMSE vs. latent factors for private methods.}
\label{fig:exp2:private3}
\end{subfigure}
     \begin{subfigure}{0.22\textwidth}
     \centering
\includegraphics[width=\textwidth,left]{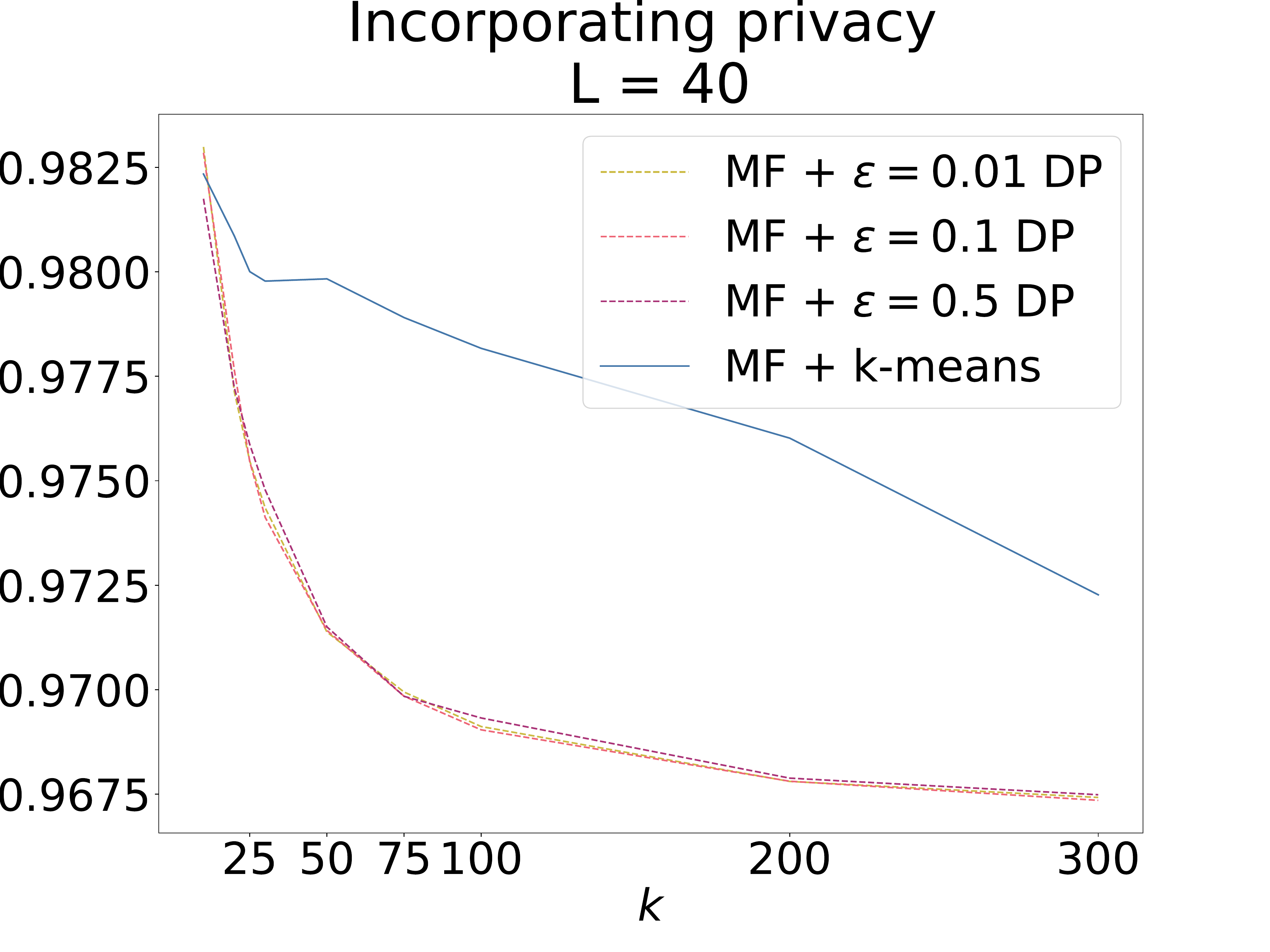}
\caption{RMSE vs. $k$ for private methods.}
\label{fig:exp2:private4}
\end{subfigure}
\caption{Results on synthetic data}
\end{figure*}

   \begin{figure*}[!h]
\centering
     \begin{subfigure}{0.31\textwidth}
         \centering
         \includegraphics[width=\textwidth]{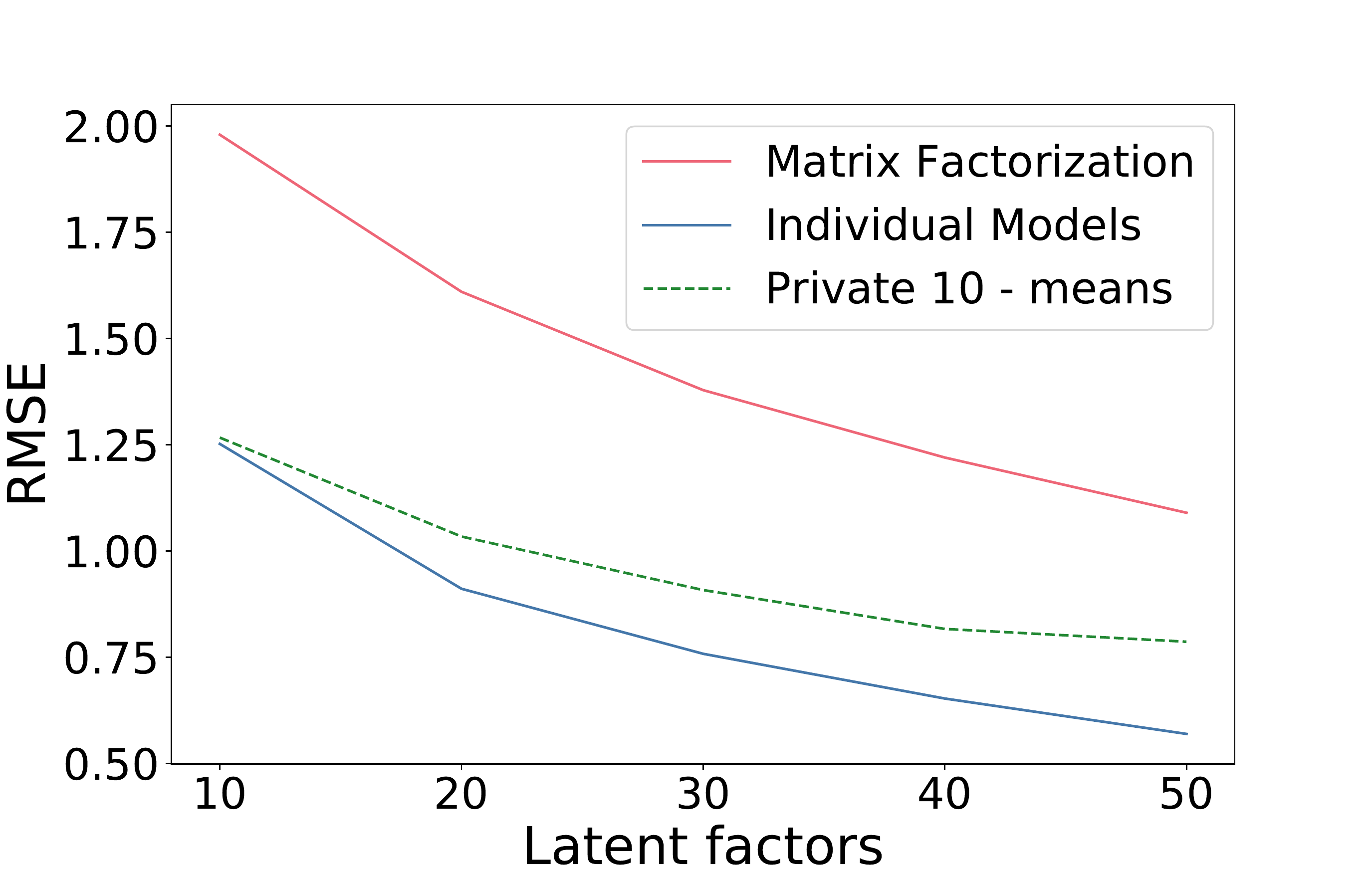}
        \caption{Training set RMSE}
        \label{fig:fed_learn}
     \end{subfigure}
     \hfill
     \begin{subfigure}{0.31\textwidth}
         \centering
         \includegraphics[width=\textwidth]{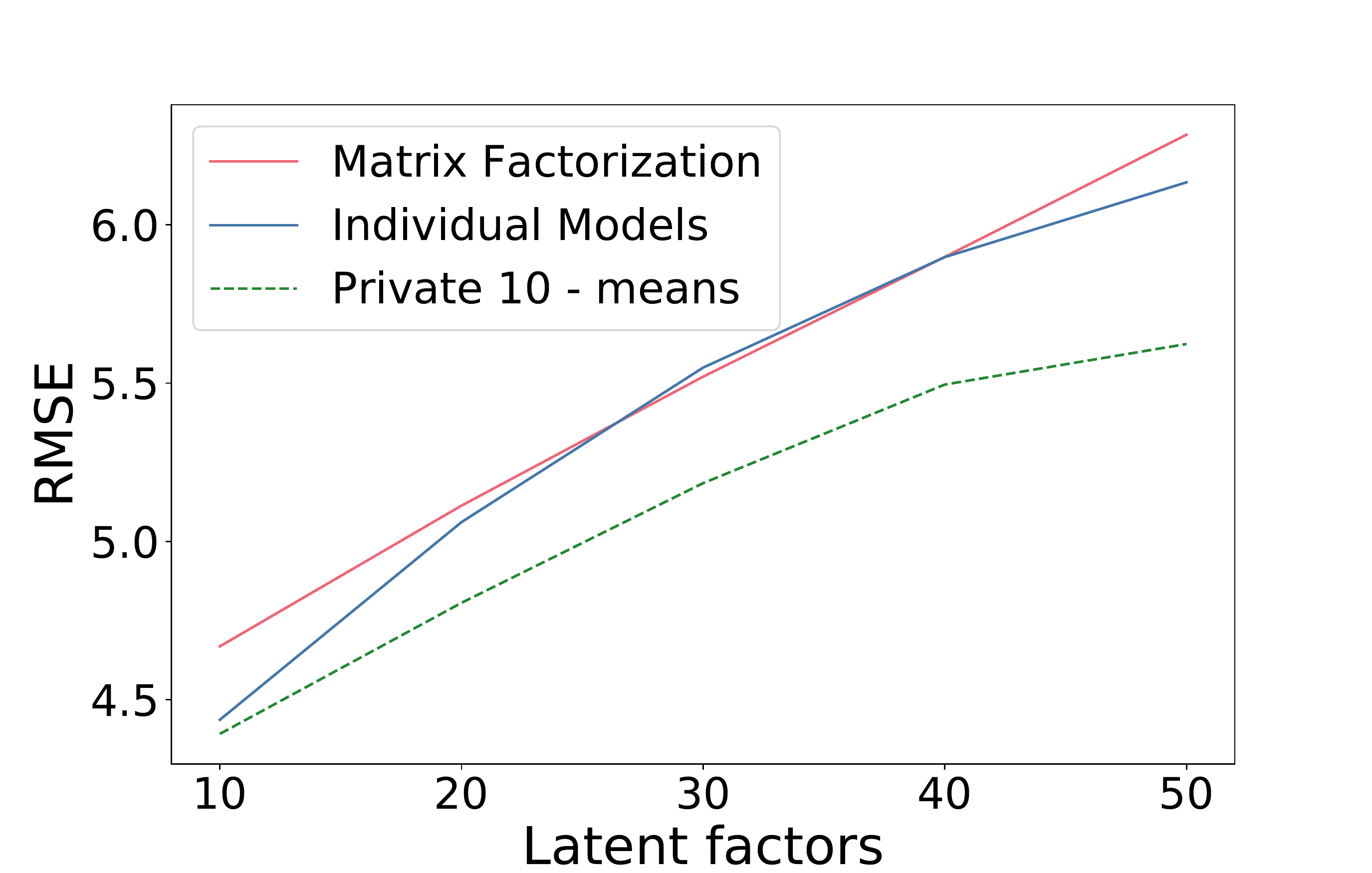}
\caption{Test set RMSE}
\label{fig:fed_learn_test_1}
     \end{subfigure}
     \hfill
     \begin{subfigure}{0.31\textwidth}
         \centering
         \includegraphics[width=\textwidth]{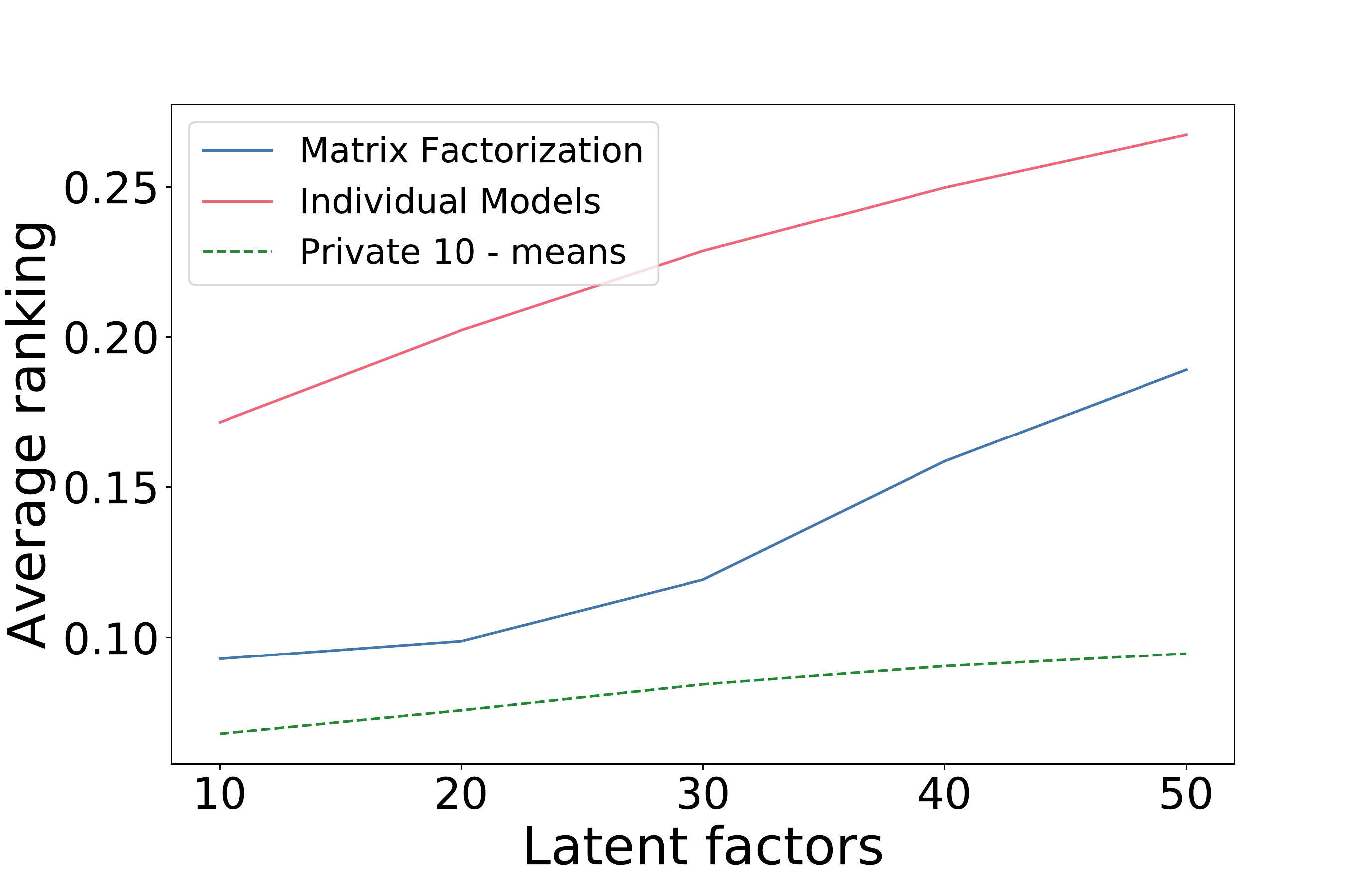}
\caption{$\overline{\texttt{rank}}$ (\eref{eqn:rank})}
\label{fig:avg_rank}
     \end{subfigure}
     \caption{Comparison of three different methods on the eICU dataset.}
     \label{fig:eICU}
\end{figure*}

We first evaluate how $k$-means performs in a non-private setting. Figures \ref{fig:exp2:nonprivate1} and \ref{fig:exp2:nonprivate2} show the RMSE when $k$ and $\ell$ are fixed, respectively.\footnote{In the supplementary, we provide additional experiments varying  $k$ between 10 and 300, and $\ell$, the dimension of the latent space, between 20 and 80l the behavior is similar.}
In both figures, we see unsurprisingly that MF has the lowest RMSE, with $k$-random exemplars from the original dataset performing second best.
For larger values of $k$ in \fref{fig:exp2:nonprivate2}, $k$-means performance deteriorates compared to $k$-random.
Based on examination of the centroids, this is most likely due to $k$-means overfitting to outliers for larger values of $k$ while $k$-random performance improves as its number of exemplars approaches the full $X$. We note that our synthetic data does not contain any clusters, so this is the worst-case scenario for the $k$-means setting; even so, we observe that the difference in reconstructive performance between the three methods is fairly small. None of the above methods guarantee privacy.
  
   Next, we compare the performance of private $k$-means and non-private $k$-means.
   In \fref{fig:exp2:private3}, we consider a relatively small value of $k=10$, and investigate the effect of $\epsilon$ as the number of latent factors changes. As expected, larger values of $\epsilon$ (i.e.,\ less private settings) yield better results. Here we observe little difference in the performance between the private and non-private algorithms. However, in \fref{fig:exp2:private4} we see that for large $k$, the private methods perform significantly better than the non-private $k$-means, mirroring the results in \fref{fig:exp2:nonprivate2}. We hypothesise that the noise introduced in the private and random scenarios acts as a regularizer, helping avoid overfitting. We note that, since the sensitivity of the random exemplar mechanism is equal to the range of the data, directly privatizing random exemplars would add excessive noise.

In both \fref{fig:exp2:private3} and \fref{fig:exp2:private4}, we find that decreasing $\epsilon$ (and therefore increasing privacy) does not have a significant negative effect on the reconstruction quality. In \fref{fig:exp2:private4}, for larger values $k$, MF + private $k$-means performs equally well, even for the smallest value of $\epsilon$, as the noise is averaged over a large number of samples. Here,  we can guarantee $0.01$-DP instead of $0.5$-DP with a minimal drop in RMSE.

   \subsection{Evaluating the federated recommender system}\label{sec:real_data}
To evaluate the entire system, we assess our model on real-world data from the eICU dataset and the Movielens 1M dataset. Similar to the experiments in the previous section, we assume that each entity extracts exemplars via private prototypes and sends them to the server. The server learns the item matrix $\hat{V}$ and sends it back to the entities. Each entity learns its own user matrix $\hat{U}_h$ and reconstructs $\hat{X}_h = \hat{U}_h\hat{V}^T$. We construct a test set $\mathcal{T}$ by randomly selecting 20\% of the users; for each selected user, we randomly select five entries.

We compare our private federated recommender system with: 1) non-private centralized matrix factorization, and 2) individual centralized matrix factorization for each hospital. The comparison is performed using different numbers of latent factors $\ell$ varied between 10 and 50. For the private prototypes, we fix $k=10$ for the hospitals' data and $k=50$ for Movielens; in both cases, $\epsilon = 0.1$.

The results on two datasets are similar and thus we present Movielens results in the supplementary. \fref{fig:fed_learn} and \fref{fig:fed_learn_test_1} show the average reconstruction error over training and test data, respectively. As expected, on the training set we see lower RMSE by the individual models than when using a jointly learned model, since there are $H$ times as many parameters to model the overall variation. Perhaps surprisingly given the noise introduced via the differential privacy mechanism, the federated model achieves training set RMSE comparable to that achieved by individual models.

Analysis of the test set RMSE (\fref{fig:fed_learn_test_1}) reveals the benefit of the federated model. The individual models obtain RMSE comparable to the jointly learned model, indicating that the low training set RMSE results from the individual models overfitting. The federated model, however, generalizes well to the test set. We hypothesise that this is because the jointly learned item matrix aids in generalization, and the use of noisy prototypes discourages overfitting.

\fref{fig:avg_rank} shows the average ranking quality for the three methods. Consistent with the test set RMSE, the federated model obtains the best ranking performance. As intended, the federated model allows each hospital to improve its predictions by obtaining relevant information from other hospitals, without compromising its patients' information.

\section{Conclusion} \label{sec:conclusion}
We propose a novel, efficient framework to learn recommender systems in federated settings. Our framework enables entities to collaborate and learn common patterns without compromising users' privacy, while requiring minimal communication. Our method assumes individuals are grouped into entities, at least some of which are large enough to learn informative prototypes; we do not require privacy within an entity. 

A future direction could be to extend this approach to the more extreme scenarios where each entity represents a single individual. This would be useful for commerce or content sites where each user wants to maintain privacy. 
Another avenue for future work is to investigate error bounds for the reconstructed matrix. Such results could allow entities to determine an appropriate privacy budget while still learning useful models.

\newpage
\bibliography{references}

\newpage
\appendix
\section{Private $k$-means Definitions and Subroutines}

\subsection{Differential Privacy Definitions}

\begin{definition}
Let $q: \mathcal{X} \times \mathcal{Y} \rightarrow \mathbb{R}$ be a utility function where $q(X,y)$ measures the utility of outputting $y$ given a dataset $X$. The \textbf{exponential mechanism} outputs $y$ with probability proportional to $\exp(\frac{\epsilon q(X,y)}{2 \Delta q})$, where $\Delta$ is the sensitivity of $q$ defined by $\Delta q = \sup_{D,D', y} |q(D,r) - q(D',r)|.$
\end{definition}

\begin{definition}
A random variable $Y$ follows a Gumbel distribution with parameter $b$ if its PDF is given by $p(y;b) = \frac{1}{b}\exp \left(-(y/b + e^{-y/b})\right)$. 
\end{definition}

\subsection{Subroutines}
\begin{algorithm*}
\SetAlgoLined
\KwIn{data $X \in \mathcal{B}(0,\Lambda) \subseteq \mathbb{R}^{ n \times p}$, parameters $ k, \epsilon, \delta$}
\KwResult{cluster centers $z_1, z_2, ..., z_k \in \mathds{R}^m$}
    Set latent dimension $p=8 \log n$, number of trials $T = 2 \log\frac{1}{\delta}$\\
    \For{$t = 1,...,T$} {
    Randomly project data from $\mathds{R}^m\rightarrow\mathds{R}^p$ via the Johnson-Lindenstrauss lemma: $Y = \frac{1}{\sqrt{p}}XG^T$, where $G\sim \mathcal{N}(0, 1)^{p\times m}$ \\
    Select an $\frac{\epsilon}{6T}$-DP candidate set $C$ following Algorithm 3 of \citet{pmlr-v70-balcan17a}.\\
    Select an $\frac{\epsilon}{6T}$-DP subset  $\{u_1, \dots, u_k\}\subset C$ using Algorithm 4 of \citet{pmlr-v70-balcan17a}.\\
    Partition $Y$ into $S_j = \{i: j= \argmin_{l} || y_i - u_l|| \}, \, j=1, ...,k$.\\
    Recover $z_j^{(t)} = \texttt{sparse\_recovery} (\{x_i\}_{i \in S_j}, j=1, ...,k, \epsilon, \delta$) \\
    }
Choose $z_1, ...,z_k$ by sampling $Z$ from $Z^{(1)}, Z^{(2)}, ..., Z^{(T)}$ with probability proportional to $\exp \left( -\frac{\epsilon \mathcal{L}(Z^{(t)})}{24 \Lambda^2} \right)$ \\
\Return{ $z_1, ...,z_k$ }\
 \caption{\texttt{priavate\_prototypes}($X, k, \epsilon, \delta$) \citep{pmlr-v70-balcan17a}. The subroutines in lines 3 and 4 depend on the choice of $\delta$. The overall algorithm is $\epsilon$-DP. }
  \label{alg:kmeans}
\end{algorithm*}

\begin{algorithm*}[h!]
\SetAlgoLined
\KwIn{data $X \in \mathcal{B}(0,\Lambda )\subseteq \mathbb{R}^{ n \times p}$, parameters $ \epsilon, \delta$, initial cube $Q$ s.t. $\{x_i \}_{i=1}^n \subseteq Q$}
\KwResult{Private Grid $C \subseteq \mathds{R}^p$}
    Initialize depth a = 0, active set of cubes $\mathcal{A} = \{ Q \}$, and set $C = \emptyset $\\
     \While{$a \leq  n$  {\bf and} $\mathcal{A} \neq $ }{
     $a = a+1$ \\
     $C=C \cup \left( \cup_{Q_i \in \mathcal{A}}\text{center}(Q_i) \right)$ \\
         \For{$Q_i \in \mathcal{A}$}{
             Remove $Q_i$ from $\mathcal{A}$\\
              Partition $Q_i$ evenly in each dimension and obtain $2^p$ cubes $\{Q_i^{(l)} \}_{l=1}^{2^p}$\\
             \For{$l \in \{1,2, \dots, 2^p \}$ }{
                 Add $Q_i^{(l)}$ to $\mathcal{A}$ with probability $f\left( |Q_i^{(l)} \cap X| \right) $ where \\
                 $f(m) = \begin{cases} \frac{1}{2}\exp{-\epsilon'(\gamma - m))} & m \leq \gamma \\
                 1-\frac{1}{2} \exp{\epsilon'(\gamma - m))},& \text{otherwise} \end{cases}$ \\
                 $\epsilon' = \frac{\epsilon}{2 \log n}$ and $\gamma = \frac{20}{\epsilon'}\log \frac{n}{\delta}$ \\
             }
        }
    }
    \Return{C}
 \caption{\texttt{private\_partition} $ (X, \epsilon, \delta, Q$) \citep{pmlr-v70-balcan17a} }
  \label{alg:partition}
\end{algorithm*}

\begin{algorithm*}[h!]
\SetAlgoLined
\KwIn{data $X \in \mathcal{B}(0,\Lambda) \subseteq \mathbb{R}^{ n \times p}$, parameters $ \epsilon, \delta$}
\KwResult{Candidate center set $C \subseteq \mathbb{R}^{ \cdot \times p}$}
    Initialize  $C = \emptyset $\\
\For{$t = 1,2, \dots T = 25 k\log \frac{n}{\delta}$}{
    Sample shift vector $v \sim \mathcal{U}([-\Lambda, \Lambda]^p)$ \\
    Let $Q_v = [-\Lambda, \Lambda]^p + v$\\
    $C = C \cup \texttt{private\_partition} (X, \frac{\epsilon}{T}, \frac{\delta}{T}, Q_v)$
 }
\Return{C}
 \caption{\texttt{candidate} $ (X, \epsilon, \delta)$ \citep{pmlr-v70-balcan17a} }
  \label{alg:candidate}
\end{algorithm*}

\begin{algorithm*}[!h]
\SetAlgoLined
\KwIn{data $X \in \mathcal{B}(0,\Lambda)\subseteq \mathbb{R}^{ n \times p}$, parameters $ \epsilon, \delta$, Candidate set $C\subseteq \mathbb{R}^{ \cdot \times p}$}
\KwResult{Clustering centers $Z = [z_1,z_2, ...,_k] \subseteq C$}
    Uniformly sample $k$ centers \textit{i.i.d.} from $C$ and form $Z^{(0)}$\\
    $T \leftarrow \frac{n}{\delta}$\\
    \For{$t=1,2,..,T$}{
    Choose $x \in Z^{(t-1)}, y \in C \setminus Z^{(t-1)}$ with probability proportional to $\exp{-\epsilon\frac{\mathcal{L}(Z') - \mathcal{L}(Z^{(t-1)} }{8\Lambda^2(T+1)}}$ \\ 
    where $Z' = Z^{(t-1)} - \{x \} + \{ y \}$ \\
    $Z^{(t)} \leftarrow Z^{(t-1)}- \{x \} + \{ y \}$
    }
    Choose $t \in \{1,2,\dots, T$ with probability in proportion to $\exp{\frac{\epsilon \mathcal{L}(Z^{(t)}}{8(T+1)\Lambda^2}}$

\Return{$Z^{(t)}$}
 \caption{\texttt{localswap} $ (X, C, \epsilon, \delta)$ \citep{pmlr-v70-balcan17a} }
  \label{alg:localswap}
\end{algorithm*}

\newpage
\section{Experiments Details}

\subsection{Experiment 1: Private $k$-means vs $k$-means on Poisson Distributed Data} \label{sec:appendixExp1}

For this experiment we generated $U \sim \text{Norm}(0,1) \in \mathds{R}^{m,l}$, $V \sim \text{Norm}(0,1) \in \mathds{R}^{n,l}$, $\lambda = UV^T$ and $X \sim \text{Pois} (UV^T) $. We set $m = 100,000$, $n=500$, $l = 100$ and observe average behaviour of private $k$-means.  As $\epsilon$ increases, the level of privacy decreases thus reducing $k$-means objective and approaching the objective achieved by standard, non-private k-means.

                \begin{figure}[h!]
                \centering
\includegraphics[width=.5\textwidth]{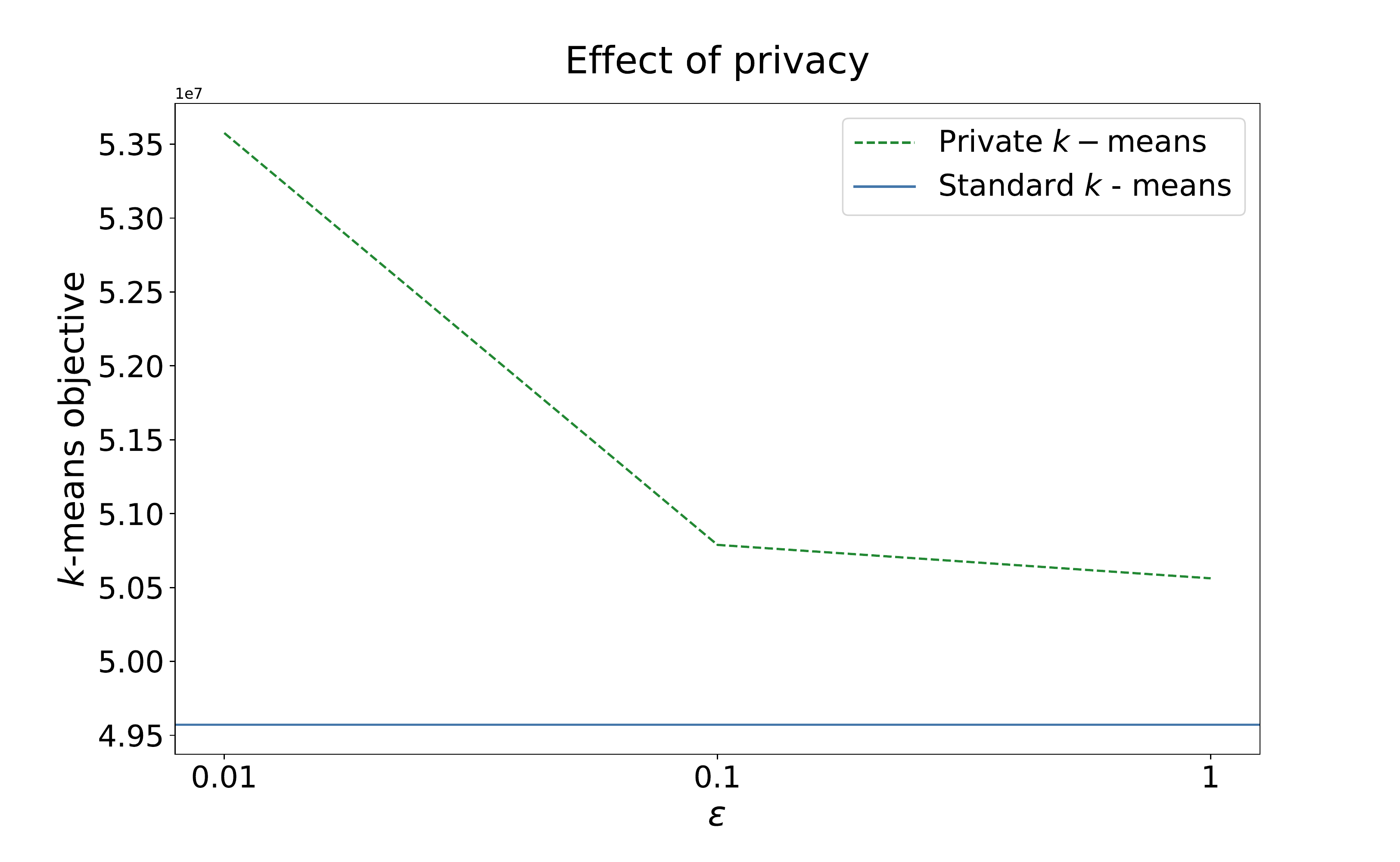}
\caption{$k$-means objective vs. level of privacy. As $\epsilon$ decreases, private $k$-means approaches the objective of non-private $k$-means.}
\label{fig:exp1:means}
\end{figure}

To implement standard $k$-means we used the Python library \texttt{scikit-learn}. For  private prototypes, we modified and implemented in Python publicly available MATLAB code  from  \cite{pmlr-v70-balcan17a} (\url{https://github.com/mouwenlong/dp-clustering-icml17}). 

\begin{figure*}[h!]
\centering
     \begin{subfigure}{0.495\textwidth}
         \centering
         \includegraphics[width=\textwidth]{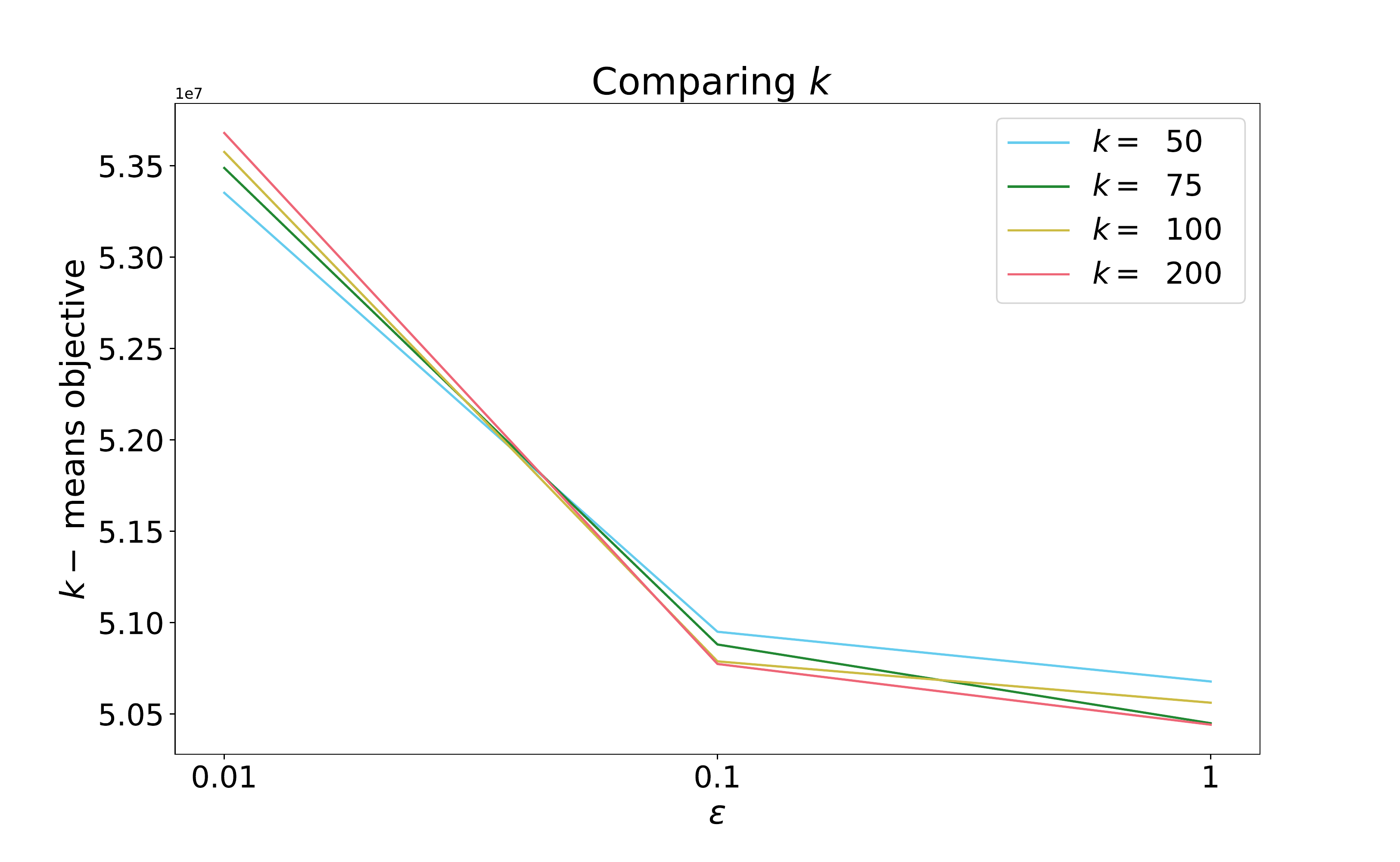}
        \caption{$k$-means loss vs. $\epsilon$ for different values of $k$. }
        \label{fig:app:exp11}
     \end{subfigure}
     \hfill
     \begin{subfigure}{0.495\textwidth}
         \centering
         \includegraphics[width=\textwidth]{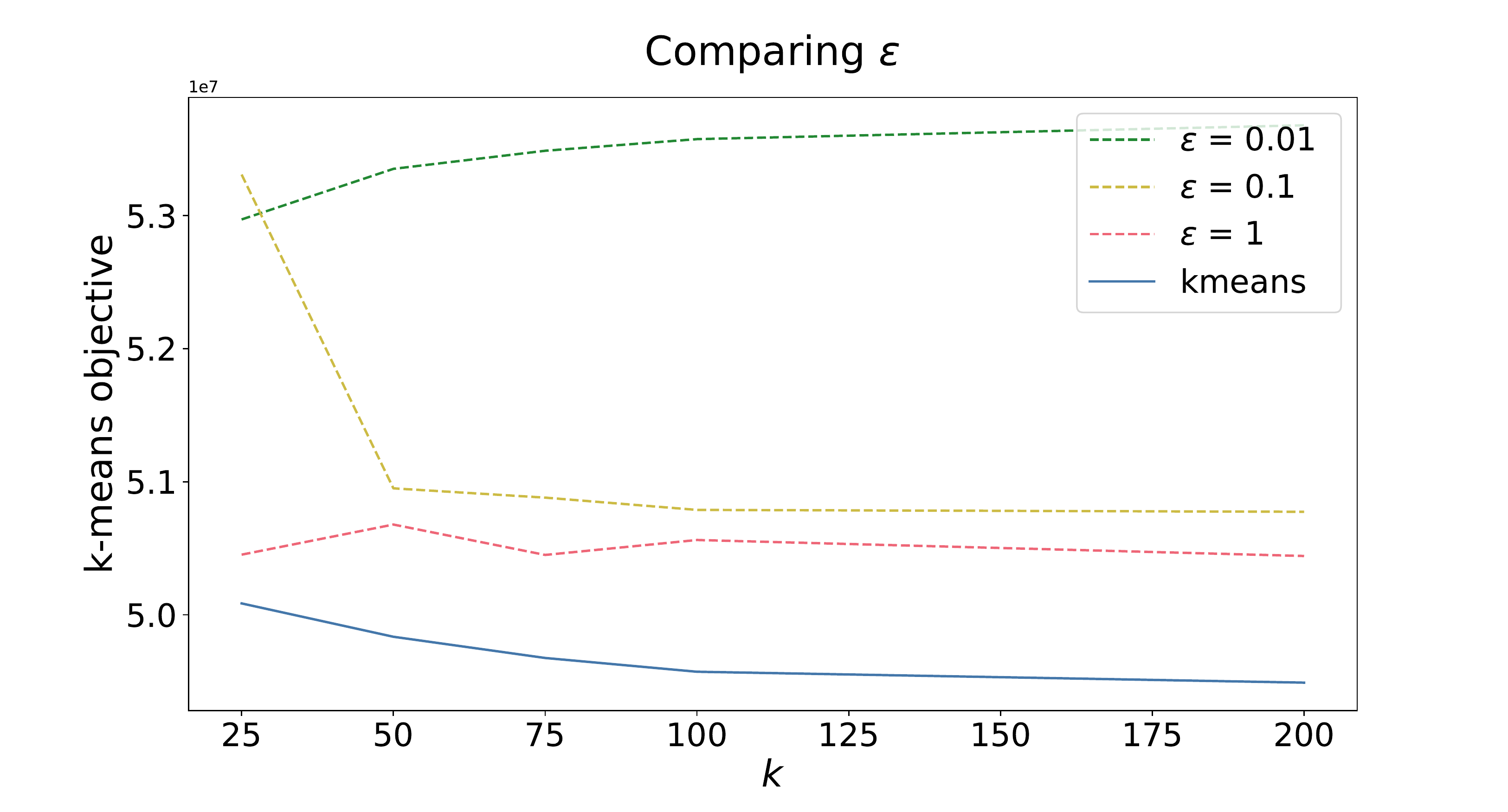}
\caption{$k$-means loss for different values of $\epsilon$. }
\label{fig:app:exp12}
     \end{subfigure}
     \caption{Private $k$-means on synthetic data. Larger values of $\epsilon$, i.e. less privacy, decrease the loss value. A large $k$ does not necessarily result in better performance. As shown in subfigure , for larger values of means, the private $k$-means algorithm repeats centers instead of overfitting, and objective minimization is stalled. }
     \label{fig:app:exp1}
\end{figure*}

\subsection{Further Experimentation on the Number of Entities}
\fref{fig:numentities}  shows the RMSE on the synthetic test dataset described in section 5.1. We observe that as the number of entities increases, the convergence improves.  This is expected since the number of observations used to approximate V also grows.

\begin{figure}[h!]
\centering
\includegraphics[width = 0.5\textwidth]{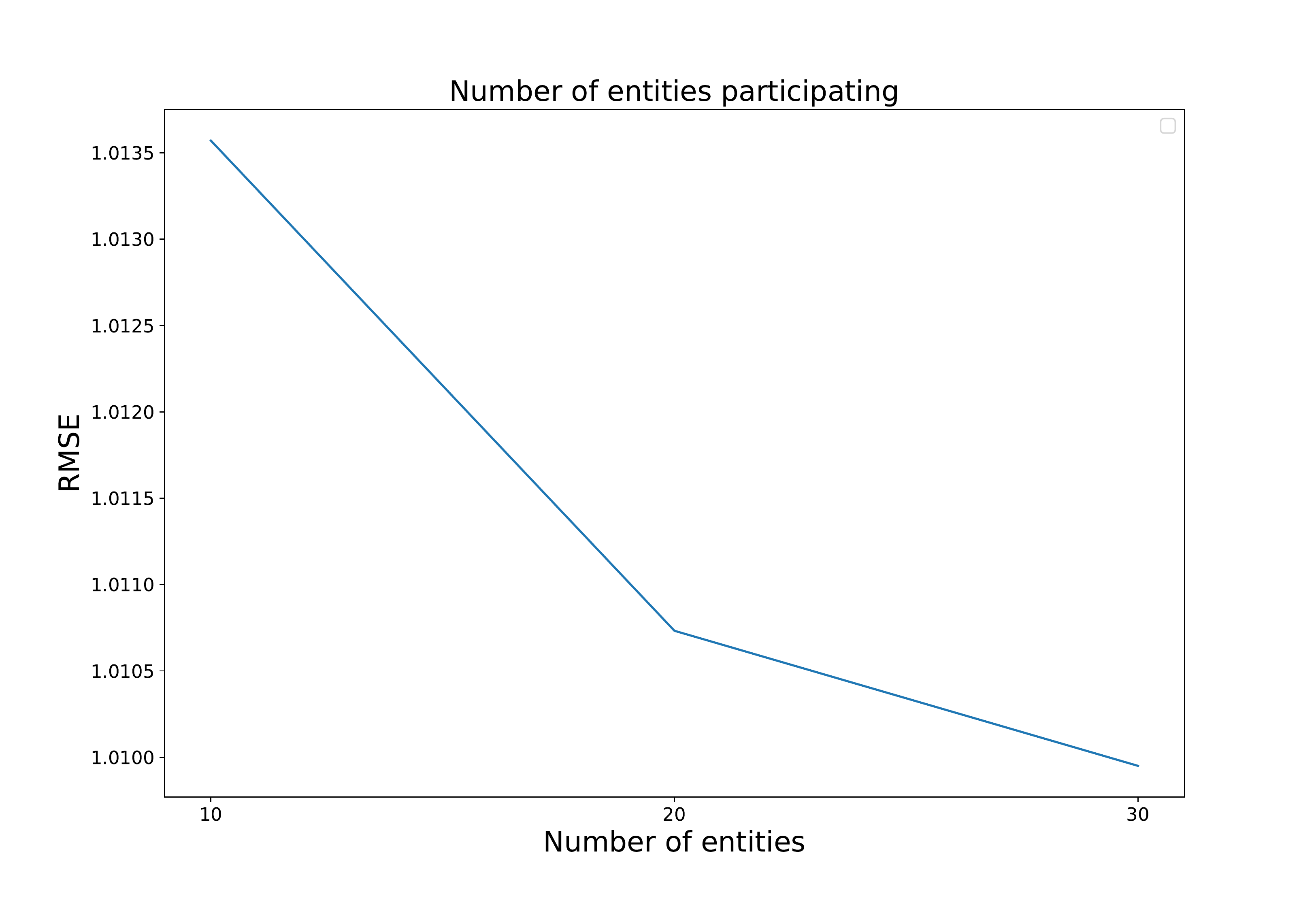}
\caption{Convergence of matrix factorization for different number of entities }
\label{fig:numentities}
\end{figure}

\subsection{ Experiment 2: Varying Parameters for Normal Synthetic Data}
\label{sec:appendizExp2}
        \begin{figure}[h!]
\includegraphics[width=\textwidth,left]{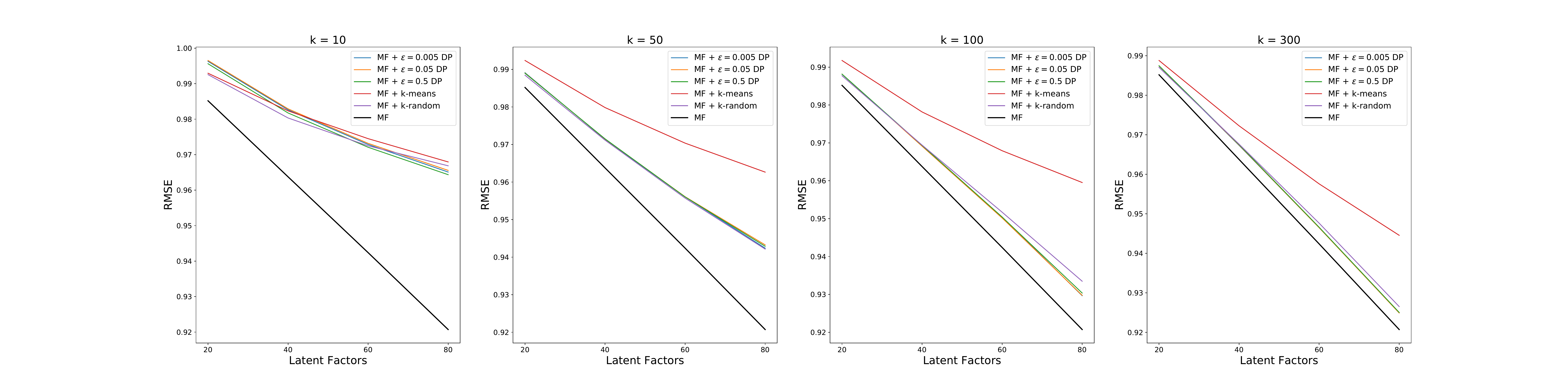}
\caption{Comparison of different prototype methods. As $k$ and $\ell$ increase, $k$-random exemplars and private $k$-means maintain competitive performance.}
\label{fig:exp2:app1}
\end{figure}

        \begin{figure}[h!]
\includegraphics[width=\textwidth,left]{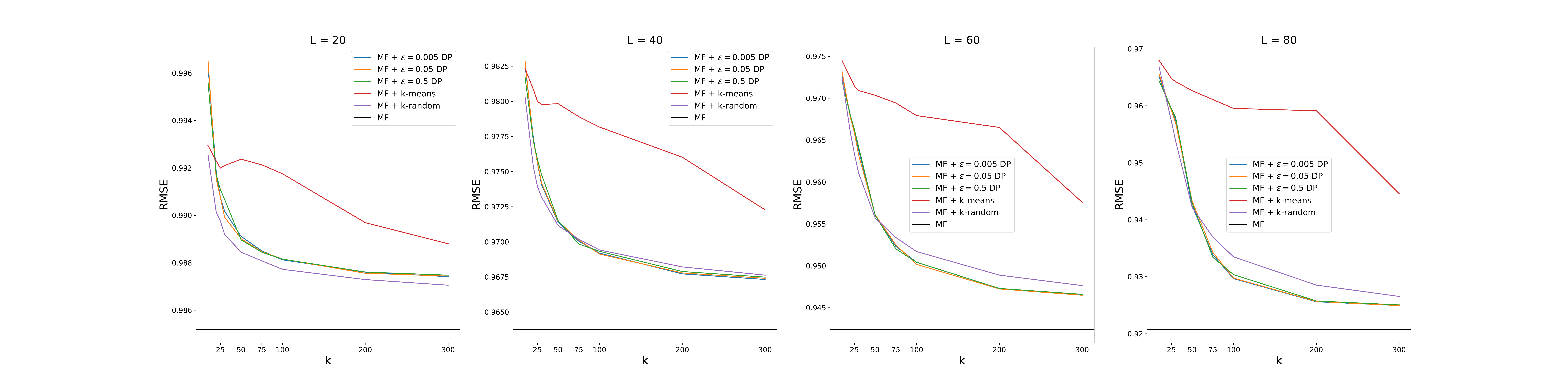}
\caption{Comparison of various methods for different values of $L$. Private methods have superior performance for large $\ell$.}
\label{fig:exp2:app3}
\end{figure}

In Section 5.3 we showed results for fixed values of the number of prototypes $k$ and the number of latent features $\ell$. Below we show additional plots for different values of those parameters. 

 In \fref{fig:exp2:app1} we observe as the number of samples increases, random $k$-exemplars outperforms $k$-means for all values of $\ell$. Note that private $k$-means performs well over a wide range of $k$. As $k$ increases, private $k$-means converge to the same value for various values of $\epsilon$.  \fref{fig:exp2:app3} compares all methods for different values of $k$. The difference in RMSE is clearer for small values of  $k$. For large values of $k$, the performance of $k$-random and $k$-private approaches that of matrix factorization.

\newpage
\subsection{Movielens Results}
\begin{figure*}[h!]
\centering
     \begin{subfigure}{0.45\textwidth}
         \centering
         \includegraphics[width=\textwidth]{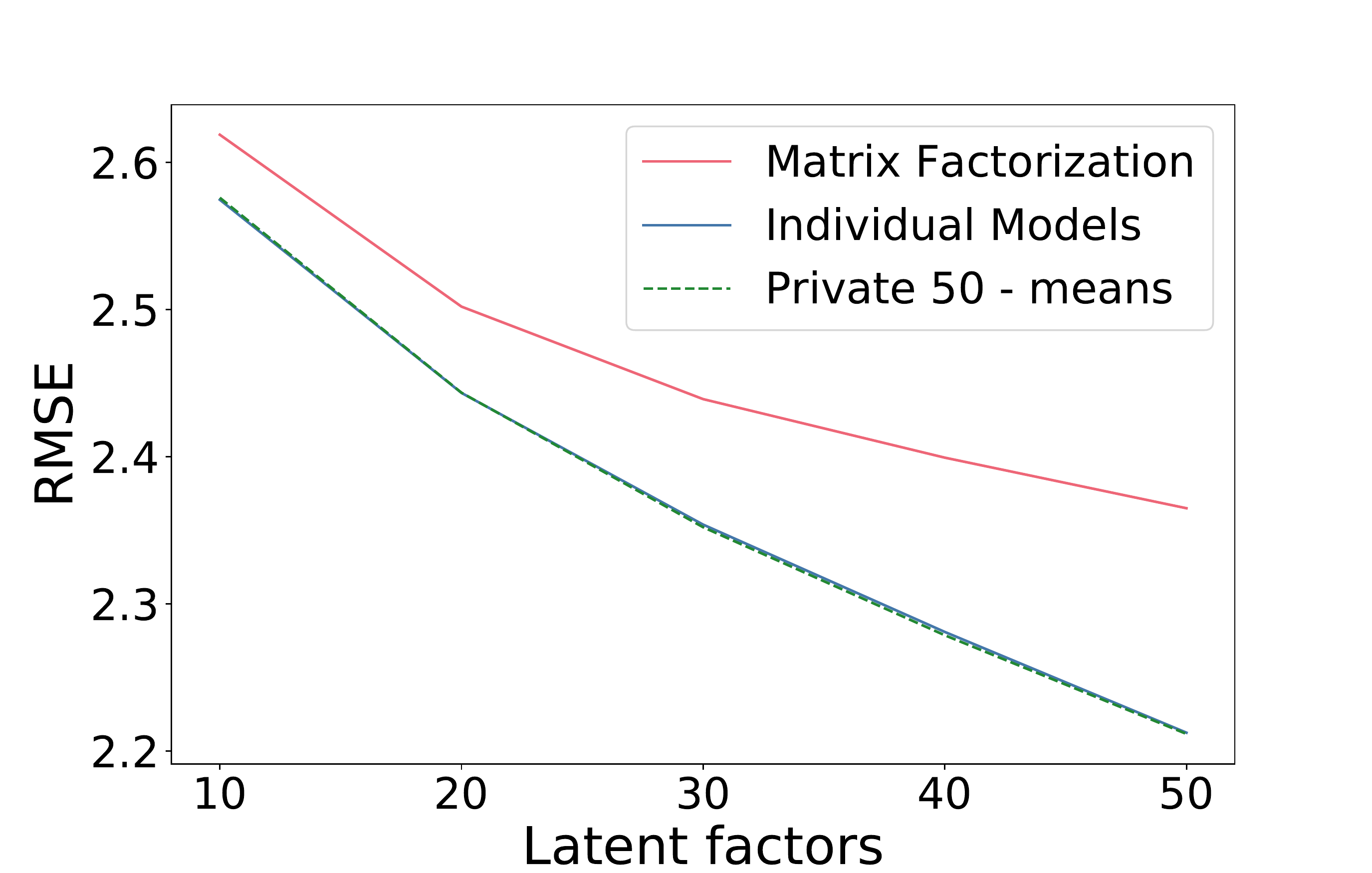}
        \caption{averaged RMSE on train data for the Movielens 1M dataset. }
        \label{fig:app:exp4RMSEtrain}
     \end{subfigure}
     \hfill
     \begin{subfigure}{0.45\textwidth}
         \centering
         \includegraphics[width=\textwidth]{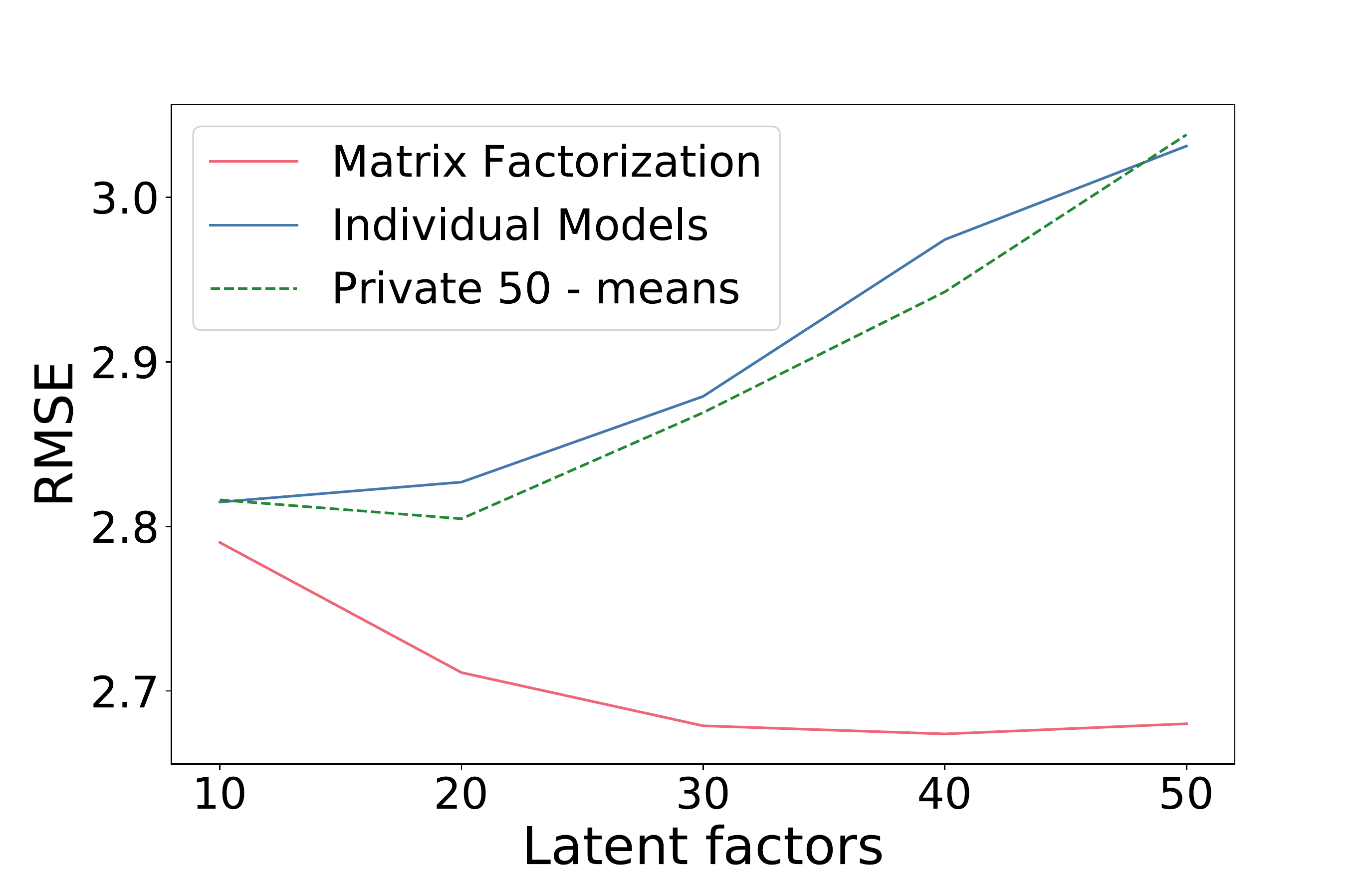}
\caption{averaged RMSE on test data for the Movielens 1M dataset. }
\label{fig:fed_learn_test_movies}
     \end{subfigure}
     \caption{ }
     \label{fig:app:exp4RMSE}
\end{figure*}

        \begin{figure}[h!]
        \centering
\includegraphics[width=0.5\textwidth]{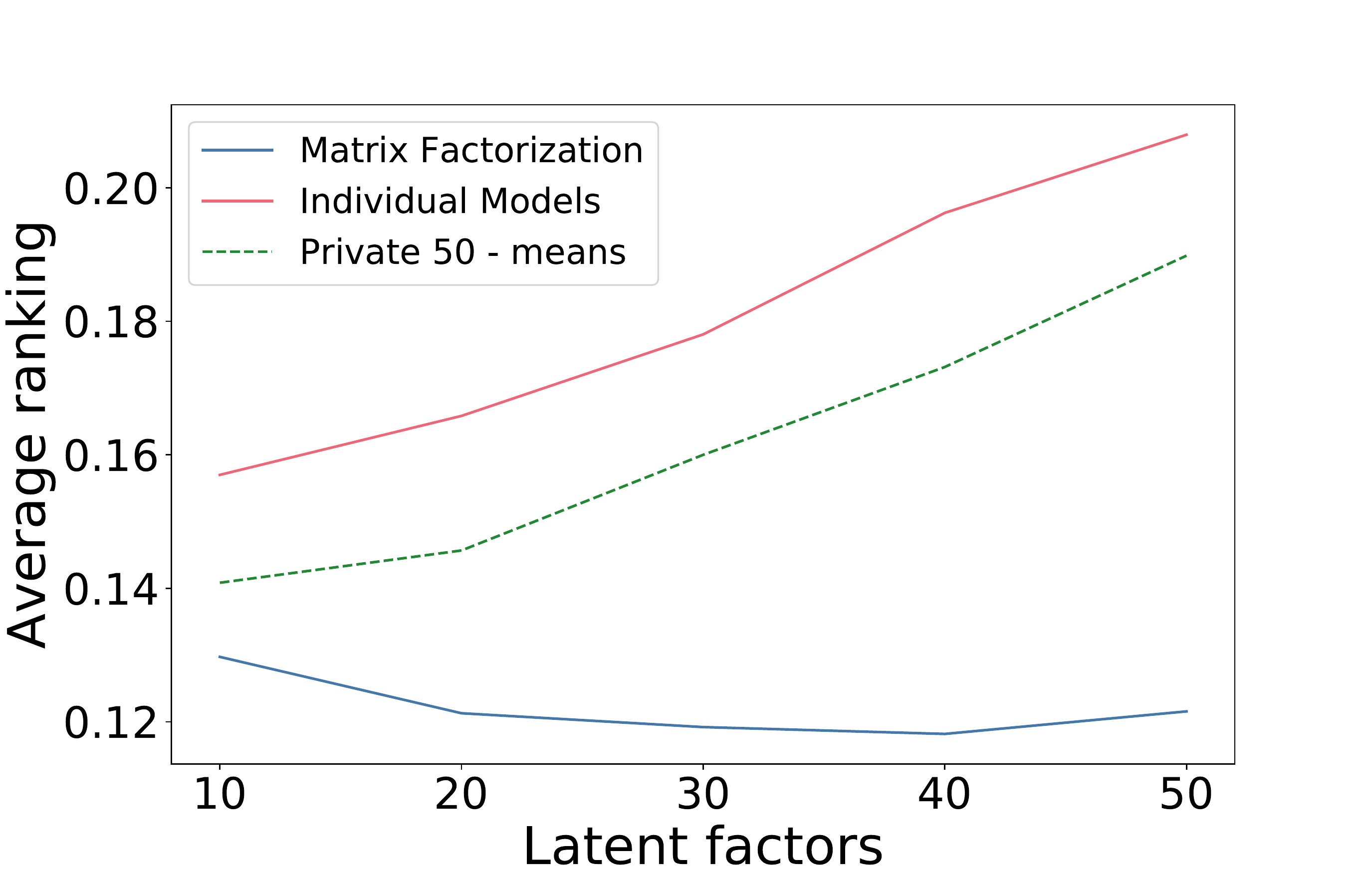}
\caption{Average rank on the Movielens 1M dataset. Privacy deteriorates performance, however DP-prototypes allow entities to collaborate and improve recommendations. }
\label{fig:app:exp3avgRang}
\end{figure}

Similar to experiments on the eICU dataset, we observe unsurprisingly in \fref{fig:app:exp4RMSE} that the global non-private matrix factorization model has lower RSME than the distributed approaches (i.e., individual models and Private 50-means). However, there is a benefit from collaboration. Recall that an average ranking above 0.5 means a ranking no better than random. Conversely, lower values indicate highly ranked recommendations matching the users’ patterns.  We observe in \fref{fig:app:exp4RMSE} the benefit of collaboration: the quality of recommendations is better for the prototypes models than the local individual models. With a small privacy budget, our method is able to share insights among entities, without sacrificing their privacy, and delivering better recommendations.

\end{document}